\newtheorem{theorem}{Theorem}
\newtheorem{lemma}[theorem]{Lemma}
\newtheorem*{lemma*}{Lemma}
\newtheorem{claim}[theorem]{Claim}
\newtheorem*{claim*}{Claim}
\theoremstyle{definition}
\newtheorem{definition}[theorem]{Definition}
\newcommand{\dev}[2]{\frac{\text{d} #1}{\text{d} #2}}
\newcommand{\iprod}[2]{\langle #1 | #2 \rangle}
\newcommand{\n}{\notag \\}
\newcommand{\tlx}{\asymp}
\newcommand{\tli}{\scriptstyle{)(}}
\newcommand{\bbbbbbbb}{\!\!\!\!\!\!\!\!}
\newcommand{\akcubed}{3-3.5k}
\begin{document}

\def\udem{D\'epartement de Physique, Universit\'e de Montr\'eal, Montr\'eal, QC, Canada H3C 3J7}
\def\udemB{Centre de Recherches Math\'ematiques, Universit\'e de Montr\'eal, Montr\'eal, QC, Canada H3C 3J7}
\def\udemC{Institut Courtois, Universit\'e de Montr\'eal, Montr\'eal, QC, Canada H2V 0B3}

\title{Spatial structure of multipartite entanglement at measurement induced phase transitions}
\author{James Allen}
\affiliation{\udem}
\author{William Witczak-Krempa}
\affiliation{\udem}
\affiliation{\udemB}
\affiliation{\udemC}
\date{\today}

\begin{abstract}
    We study multiparty entanglement near measurement induced phase transitions (MIPTs), which arise in ensembles of local quantum circuits built with unitaries and measurements. In contrast to equilibrium quantum critical transitions, where entanglement is short-ranged, MIPTs possess long-range k-party genuine multiparty entanglement (GME) characterized by an infinite hierarchy of entanglement exponents for $k\geq 2$. First, we represent the average spread of entanglement with ``entanglement clusters,'' and use them  to conjecture general exponent relations: 1) classical dominance, 2) monotonicity, 3) subadditivity. We then introduce measure-weighted graphs to construct such clusters in general circuits.
    Second, we obtain the exact entanglement exponents for a 1d MIPT in a  measurement-only circuit that maps to percolation by exploiting non-unitary conformal field theory. The exponents, which we numerically verify, obey the inequalities. 
    We also extend the construction to a 2d MIPT that maps to classical 3d percolation, and numerically find the first entanglement exponents. Our results provide a firm ground to understand the multiparty entanglement of MIPTs, and more general ensembles of quantum circuits.
\end{abstract}

\maketitle

\section{Introduction}

Genuine multipartite entanglement (GME), a fundamental resource for quantum algorithms, occurs when all parties contribute to the entanglement. More precisely, a general density matrix has GME if it cannot be written as a biseparable state. 
It turns out that GME is generally fragile in subregions of quantum many-body systems (such as spin chains) when one spatially separates the parties~\cite{Osterloh_2002,Osborne,Javanmard2018,Parez2024b}, or under time-evolution following a quantum quench~\cite{Parez2024}. This ``fate of entanglement'' even holds in quantum critical systems in equilibrium~\cite{Parez2024b}. 
However, GME is also known~\cite{Sang2020,Avakian2024} to persist at longer range (i.e. power-law decay) in certain \textit{non}-unitary systems. Important examples of these systems are random quantum circuit (RQC) ensembles in the presence of measurements~\cite{Nahum2016,Bouland2019,Bouland2019a,Potter2021,Fisher2023}. These circuits in general have many interesting features, from their potential as error-correcting codes~\cite{Brandao2016} to their use in demonstrating quantum supremacy~\cite{Boixo2018,Napp2022}, but one of the most important properties is the presence of a phase transition when the measurement rate reaches a certain threshold~\cite{Li2019,Skinner2019,Vasseur2019,Zabalo2020,Bera2020,Ware2023}. The critical point of the measurement induced phase transition (MIPT) provides the long-range GME.

At criticality, the non-unitary system will have $k$-party entanglement with power-law scaling at some exponent $\alpha_k$. That is, if all distances between subregions are scaled by a factor $\lambda\gg 1$ (without changing the sizes of the subregions themselves), we expect the $k$-party entanglement $E_k$ to asymptotically scale as: 
\begin{gather}\label{eq:gme_scaling}
    E_k \sim \frac{1}{\lambda^{\alpha_k}} 
\end{gather}
The key to understanding GME in these ensembles lies in the behaviour of the entanglement exponents $\alpha_k$. 

In this paper, we will first claim that the entanglement exponents $\alpha_k$ follows some general guidelines. Firstly, if an equivalent exponent for $k$-party mutual information, $\alpha_k^{\text{MI}}$, exists, it must always be at most $\alpha_k$. Secondly, $\alpha_k$ is monotonically increasing with $k$. Our last and strongest claim is that $\alpha_k$ is subadditive in $k$ - which guarantees, for example, that these systems will not have an abnormally suppressed higher-party GME. We present our arguments for these guidelines in Section~\ref{section:subadditivity}, using a heuristic picture of entanglement in terms of cluster connection conditions.

We will test our claims with numerical simulations of a class of RQC ensembles known as measurement-only circuits (MOC). Most RQC ensembles are difficult to simulate classically~\cite{Movassagh2019,Dalzell2022,Napp2022,Chen2024}, unless there exists a comprehensive set of stabilizers that can remove the exponential cost in system size~\cite{Gottesman1998,Bravyi2016,Audenaert2005}, and it is difficult to determine the $k$-party GME of a generic mixed state in a way that is both efficient and accurate~\cite{Guhne2009,Guhne2010,Ma2011b,Lyu2024,Audenaert2005}. The MOC of the projective Ising model~\cite{Nahum2020,Lang2020}, consisting of nothing but single-site $X$ measurements and nearest-neighbor $ZZ$ measurements, is simple enough that all of these difficulties are mitigated, while still providing an interesting entanglement structure typical of critical non-unitary systems, due to its mapping to 2D bond percolation.

\setlength{\tabcolsep}{8pt}
\begin{table}[]
    \centering
    \begin{tabular}{|c| c| c c c c |} \hline
        System & Theory $\alpha_k$ & $\alpha_2$ & $\alpha_3$ & $\alpha_4$ & $\alpha_5$ \\ \hline 
        1+1D & $2k$ & 4.01(5) & 6.2(3) & 8.2(4) & 10.1(8) \\ \hline 
        2+1D & $3k$? & 6-6.5& 8.5-10 & 10-15 & - \\ \hline 
        \hline 
        System & Theory $\alpha_k^{\text{MI}}$ & $\alpha_2^{\text{MI}}$ & $\alpha_3^{\text{MI}}$ & $\alpha_4^{\text{MI}}$ & $\alpha_5^{\text{MI}}$ \\ \hline 
        1+1D & $k/3$ & 0.67 & 0.99 & 1.32 & 1.65 \\ \hline 
        2+1D & - & 2.1 & 2.9 & 3.8 & - \\ \hline 
    \end{tabular}
    \caption{Measured values of the $k$-party entanglement power law exponent $\alpha_k$ and the mutual information power law exponent $\alpha_k^{\text{MI}}$, for 1+1D and 2+1D systems, compared to theoretical predictions. The $\alpha_k=3k$ prediction for 2+1D systems is a hypothetical extension of the 1+1D theory.}
    \label{tab:slopes}
\end{table}

Ref.~\onlinecite{Sang2020} previously measured bipartite GME in this system, obtaining $\alpha_2 = 4$ both numerically and theoretically through the CFT of the percolation model. We extend the theoretical result in Section~\ref{section:theory} to show that $\alpha_k = 2k$ for general $k$, and numerically verify this in Section~\ref{section:numerics} up to $k=5$, also showing classical dominance and monotonicity of entanglement exponents in this regime. We cover the 2+1D version of this circuit in Section~\ref{section:3d}, which maps to the elusive 3D bond percolation model. Our measured and predicted values of the entanglement scaling exponent $\alpha_k$ and the mutual information scaling exponent $\alpha_k^{\text{MI}}$ can be found in Table~\ref{tab:slopes}. Both circuits easily satisfy classical dominance and monotonicity of $\alpha_k$, and the 1+1D circuit saturates subadditivity, with the numerical evidence of the 2+1D circuit suggesting exponents that are near additive as well.

\section{Scaling of $k$-party Entanglement} \label{section:subadditivity}
\begin{figure}[h]
    \centering
    \begin{tikzpicture}
        \begin{scope}
            \node[anchor=north west,inner sep=0] (image_a) at (0,0)
            {\includegraphics[width=0.3\columnwidth]{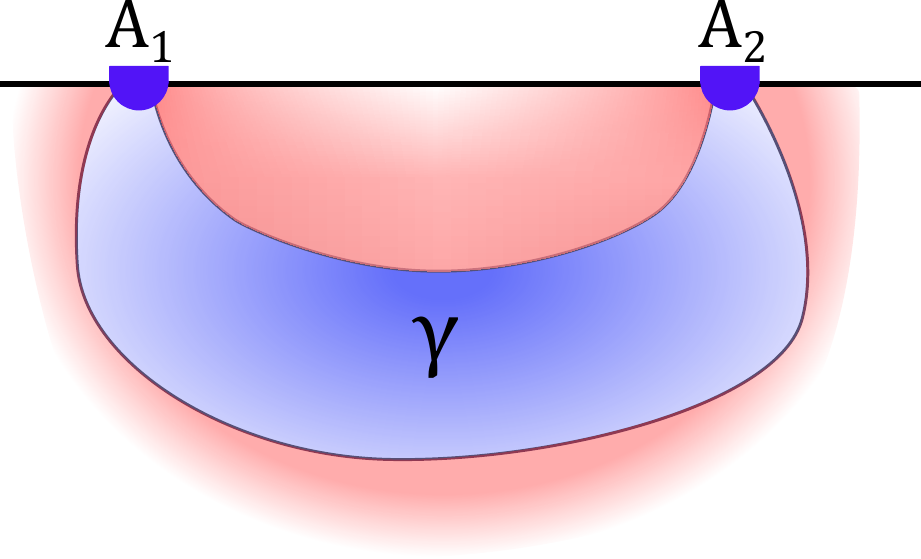}};
        \end{scope}
    \end{tikzpicture}
    \vspace*{-0.2cm}
    \caption{A pictoral representation of $2$-party entanglement between subregions $A_1, A_2$. Time flows upwards. In order for there to be large $k$-party entanglement through a cluster $\gamma$, there should be high entanglement regions \textit{(blue)} through the bulk of $\gamma$ and low entanglement regions \textit{(red)} between $\gamma$ and the complement of $A=A_1A_2$. }
    \label{fig:entanglement_region_picture_2}
\end{figure}

\subsection{Cluster Picture of Entanglement}
First, we construct a heuristic picture to quantify $k$-party entanglement in RQCs, and their ensembles. Let us consider an RQC that prepares final states over a probability distribution. We want to know whether parties $A_1\cdots A_k$ in the final state share GME. In order to maximise GME, the circuit needs to generate strong entanglement within $A = \bigcup_{j} A_j$, and to limit entanglement between $A$ and its complement $\bar{A}$. The latter follows from the monogamous nature of entanglement.
The starting point of our construction is the ``minimal spanning graph''~\cite{Avakian2024}: the shortest subgraph connecting the $k$ parties, where dis-entangling operations (like single-site measurements) correspond to removing an edge between qubits in the graph representation of the quantum circuit. The strength of GME within $A$ is limited by the monogamous property of entanglement: entanglement between $A$ and $\bar{A}$ should be low. This amounts to limiting ``parasitic'' edges between $A$ and $\bar{A}$. This construction suggests the following generalisation: the entanglement can be described by a \emph{cluster} $\gamma$ that connects the subregions $A_1 \cdots A_k$ through the bulk, with high entanglement inside the cluster, but low entanglement on the boundary of the cluster. An illustration of such a bipartite cluster is given in Fig.~\ref{fig:entanglement_region_picture_2}a.
As a physical analogy, we can assign a free energy cost to high entanglement regions, which then acts as an effective tension on $\gamma$. On the other hand, the free energy cost of low entanglement regions acts as a repulsion between $\gamma$ and the complement of $A$ on the surface. 

In the following subsection, we will present several conjectures on the nature of the scaling exponents $\alpha_k$ in systems with long-range GME, with arguments in their favor based on the cluster picture. In Section~\ref{section:numerics}, we introduce an explicit construction, the \emph{entanglement-weighted graph}, that allows us to identify entanglement clusters in numerical simulations of quantum circuits. 

\subsection{Entanglement exponent relations}\label{section:entanglement_exponent_conjectures}
At a critical point, an ensemble of circuits will possess $k$-party GME that scales as (\ref{eq:gme_scaling}). Likewise, $k$-party mutual information is expected to scale as $\lambda^{-\alpha_k^{\text{MI}}}$, where we introduced the mutual information exponent $\alpha_k^{\text{MI}}$. 
We make the following conjectures for the GME entanglement exponents:
\begin{align}
    \alpha_k &\geq \alpha_k^{\text{MI}} \label{eq:classical_dominance}\\
    \alpha_{k+1} &\geq \alpha_k \label{eq:monotonicity_of_entanglement}\\
    \alpha_k + \alpha_\ell &\geq \alpha_{k+\ell} \label{eq:subadditivity_of_entanglement}
\end{align}
which represent dominance of classical correlations over entanglement, monotonicity of entanglement, and subadditivity of entanglement exponents, respectively.

The arguments for classical dominance and monotonicity are relatively simple. Because $k$-party mutual information does not suffer from a monogamy condition, the requirements for a cluster $\gamma$ yielding $k$-party MI are relaxed. In particular, the low-entanglement condition between $\gamma$ and the surface complement does not have to be as strong. This easing of conditions results in more frequent instances of $k$-party MI, and therefore corresponds to a non-increased $\alpha_k^\text{MI}$ at large separation.
For monotonicity, we start with a $k+1$-party entanglement condition on subregions $A_1,...,A_{k+1}$. By bringing the last two subregions $A_k, A_{k+1}$ together, we must increase $k+1$-party entanglement, as distances between subregions have only been reduced. Therefore, by combining the two subregions together into a larger, new subregion $A_k'$, we must have 
\begin{align}
    E_{k+1}(A_1, ..., A_{k+1}) &\leq E_k(A_1, ..., A_{k-1}, A_k')\n
    &\leq E_k(A_1', ..., A_k')
\end{align}
where $A_1', ..., A_{k-1}'$ have the same centers as $A_1, ..., A_{k-1}$, but with width scaled by $|A_k'|/|A_k|$. In the limit of large distances, this approaches $2^k E_k(A_1, ..., A_k)$. Therefore, the $k$-party entanglement between subregions $A_1, ..., A_k$ upper bounds the $k+1$-party entanglement, up to a constant.

\begin{figure}[h]
    \centering
    \begin{tikzpicture}
        \begin{scope}
            \node[anchor=north west,inner sep=0] (image_a) at (0,0)
            {\includegraphics[width=0.85\columnwidth]{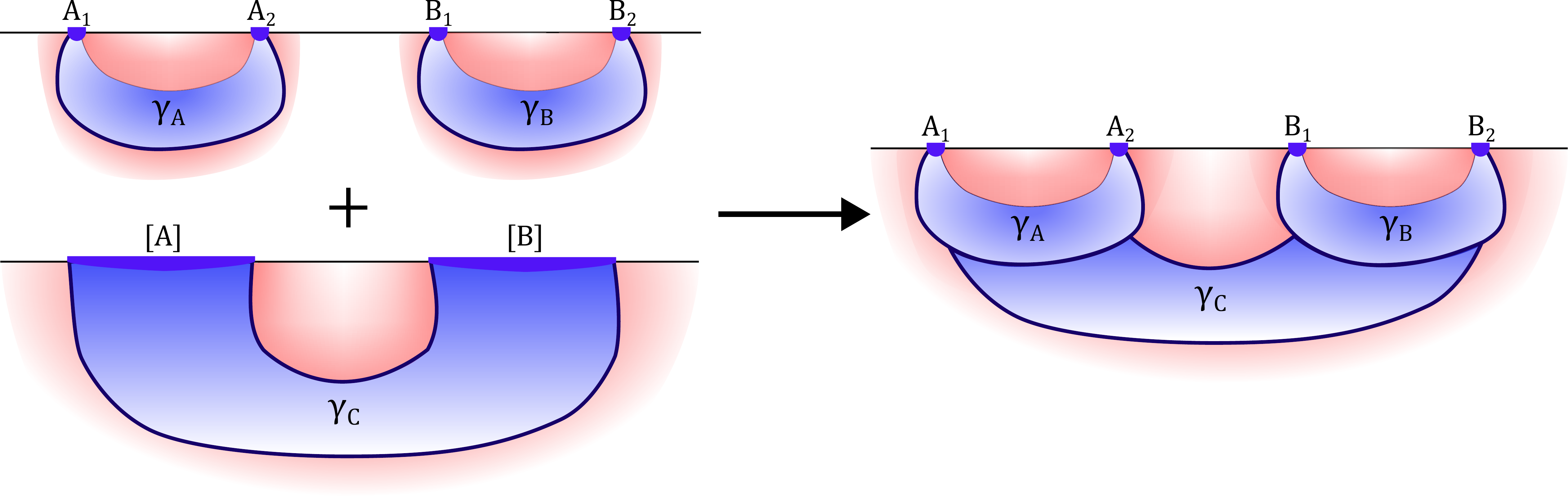}};
        \end{scope}
    \end{tikzpicture}
    \vspace*{-0.2cm}
    \caption{Combining 2-party clusters $\gamma_A, \gamma_B, \gamma_C$ into a 4-party cluster $\gamma_{ABC}$ \textit{(right)}. In the process, the top parts of $\gamma_C$ are replaced by $\gamma_A, \gamma_B$, while the region between $\gamma_A$ and $\gamma_B$ is replaced by $\gamma_C$ - these both correspond to a relaxing of conditions.}
    \label{fig:entanglement_region_picture_4_alt}
\end{figure}

To argue for subadditivity, let us consider the case $k=\ell=2$. We will start with independent 2-party cluster conditions $\gamma_A, \gamma_B, \gamma_C$ that entangle $A_1\leftrightarrow A_2$, $B_1 \leftrightarrow B_2$, and $[A]\leftrightarrow [B]$ respectively (where $[A]$ is the entire space from the middle of $A_1$ to the middle of $A_2$, likewise for $[B]$). Using these clusters, we can construct a $4$-party entanglement condition $\gamma_{ABC}$ that combines all the two-party entanglement conditions together (Fig.\ref{fig:entanglement_region_picture_4_alt}). Wherever conditions intersect, we relax one of the conditions - this could be from e.g. removing the corridor of low entanglement between $\gamma_A$ and $\gamma_B$ in place of the high entanglement condition in the bulk of $\gamma_C$, or removing the parts where $\gamma_C$ is close to the surface in place of the low entanglement conditions of $\gamma_{A}$ or $\gamma_B$.

Because of these relaxations, satisfying the combined condition $\gamma_{ABC}$ is easier than satisfying the original three conditions $\gamma_A, \gamma_B, \gamma_C$ independently. Two of those conditions ($\gamma_A, \gamma_B$) correspond to two-party entanglement between subregions with width $w$, which we take to be small, $w\to 0$. The third condition ($\gamma_C$) corresponds to two-party entanglement between subregions $[A],[B]$ of width $1$, so the difficulty of satisfying this condition does not scale with $w$. Therefore, we have obtained a 4-party entanglement condition $\gamma_{ABC}$ whose difficulty scales no worse than two generic 2-party entanglement conditions (plus a constant), implying $\alpha_4 \geq 2\alpha_2$. The argument can be readily generalized to larger $k,\ell$.

\section{Multipartite entanglement in a measurement-only stabilizer circuit} \label{section:theory}

The projective Ising model measurement-only circuit family consists of circuits $C$ acting on sites $S$, consisting of $d$ alternating layers of single-site $X_i$ measurements with probability $1-p$ and two-site $Z_i Z_j$ measurements with probability $p$. The initial state is a product state of $+1$ eigenstates of $X_i$ at each site. Because each circuit is a member of the Clifford group, its entanglement can be determined from the stabilizers alone~\cite{Bravyi2006}. Additionally, the extra restrictions on the stabilizers leads to a very simple condition for GME.

\begin{figure}[h]
    \centering
    \begin{tikzpicture}
        \begin{scope}
            \node[anchor=north west,inner sep=0] (image_a) at (0,0)
            {\includegraphics[width=0.6\columnwidth]{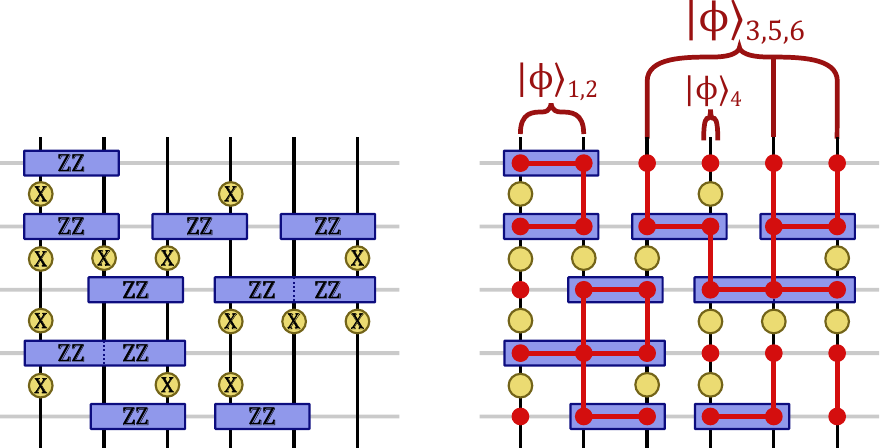}};
            \node [anchor=north west] (note) at (-0.3,-1.4) {\small{\textbf{a)}}};
            \node [anchor=north west] (note) at (5.2,-1.4) {\small{\textbf{b)}}};
        \end{scope}
    \end{tikzpicture}
    \caption{Obtaining the loop and percolation models of a realization of a 1+1D measurement-only stabilizer circuit. (a) A particular realization $C$ of the ensemble, consisting of alternating layers of X and ZZ measurements. (b) The percolation model $P_{S,d}(C)$ of the circuit realization, in red, overlayed on top of the original circuit. From this model we can see that sites $\{1,2\}$ form a cat state with 2-party entanglement, and sites $\{3,5,6\}$ form a cat state with 3-party entanglement, while site 4 is unentangled with any neighbor. }
    \label{fig:moc_to_percolation}
\end{figure}

Specifically, we consider the percolation model $P_{S,d}(C)$ on $d$ layers, where each layer is a copy of the complete graph on $S$. Each intralayer bond is open if a $Z_i Z_j$ measurement was applied to the bond's vertices at that particular layer, and each interlayer bond is open if no $X_i$ measurement was applied to the corresponding site at that layer (see Fig.~\ref{fig:moc_to_percolation}b for an example of $P_{S,d}(C)$ on the 1+1D circuit of Fig.~\ref{fig:moc_to_percolation}a). We consider the \textit{cluster surfaces} of that model, i.e. the intersection of any particular connected cluster in $P_{S,d}(C)$ with the final layer. 
The output of $C$ is a product of cat states~\cite{Nahum2020,Lang2020,Sang2020}
\begin{gather}
    \ket{\psi} = \bigotimes_{i=1}^\ell \ket{\phi_i}_{j^i_1 ... j^i_{c(i)}}
\end{gather}
over every cluster surface $\{j^i_1 ... j^i_{c(i)}\}$ (see Appendix~\ref{app:output_of_moc} for a derivation of this result, based on the stabilizers of the circuit).
Each cat state over sites $j_1 ... j_k$ is an equal-amplitude superposition of two site-wise orthogonal product states:
\begin{gather}
    \ket{\phi}_{j_1 ... j_k} = \frac{1}{\sqrt{2}}\left(\ket{\upsilon_1}_{j_1} ... \ket{\upsilon_k}_{j_k} \pm \ket{\overline{\upsilon}_1}_{j_1} ... \ket{\overline{\upsilon}_k}_{j_k}\right)
\end{gather}
such that $\iprod{\upsilon_j}{\overline{\upsilon}_j} = 0$ for all $j$. 
As we can transform any cat state on sites $j_1, ..., j_k$ to a $k$-qubit GHZ state on the same sites using local unitaries~\cite{Bravyi2006}, the amount of $k$-party entanglement shared by those sites is always equal to that of the GHZ state. We can therefore \textit{define} the degree of $k$-party entanglement between sites $j_1,...,j_k$ in a random circuit ensemble as being exactly the probability that those sites form a cluster surface on a given circuit realization. We will refer to this probability as the \textit{hit chance} from now on. Generalizing from sites $j_1,...,j_k$ to subregions $A_1, ..., A_k$, it remains true that any cat state with support in each subregion and nowhere else can be exchanged with a $k$-qubit GHZ using unitaries local to each subregion. Therefore, we can define the $k$-party entanglement as the average number of cat states with support in each subregion, and nowhere else. 
The percolation model of MOCs offers an exact version of the cluster picture from before, as we can directly identify those clusters with the percolation model's connected components. High-entanglement conditions on a cluster $\gamma$ simply become crossing conditions between the ends of $\gamma$, while low-entanglement conditions become prohibitions against crossing - the probability of these conditions being satisfied can then be expressed in terms of Cardy's crossing probabilities~\cite{Cardy1992}.

We will now focus on the measurement-only stabilizer circuit from Refs.~\onlinecite{Nahum2020} and \onlinecite{Lang2020}, which is on a one-dimensional lattice of $N$ qubits with periodic boundary conditions (Fig.~\ref{fig:moc_to_percolation}a). It is known from the quantum fisher information~\cite{Lira-Solanilla2025} that these circuits exhibit large entanglement, although this does not necessarily mean that entanglement exists between local subregions - the quadratic QFI at low measurement rates, for example, corresponds to percolation configurations that are too open to host such structures. The percolation model corresponding to this circuit is precisely the bond percolation model on the 2D square lattice, with a critical point at $p=1/2$. All conditions of $k$-party entanglement or mutual information can be approximated to correlation functions of different operators in the corresponding CFT. For example, Ref.~\onlinecite{Sang2020} connected the 2-party entanglement condition to the correlation function of 2 fields with the same scaling dimension as the stress-energy tensor. In Appendix~\ref{app:temperley_lieb}, we generalize this argument to cover entanglement at all parties; the $k$-party entanglement condition will always correspond to the correlation function of $k$ fields, each with the same scaling dimension as the stress-energy tensor, yielding $\alpha_k = 2k$.

\section{Numerical simulation of the 1+1D measurement-only circuit} \label{section:numerics}

\begin{figure}[h]
    \centering
    \begin{tikzpicture}
        \begin{scope}
            \node[anchor=north west,inner sep=0] (image_a) at (0,0)
            {\includegraphics[width=0.4\columnwidth]{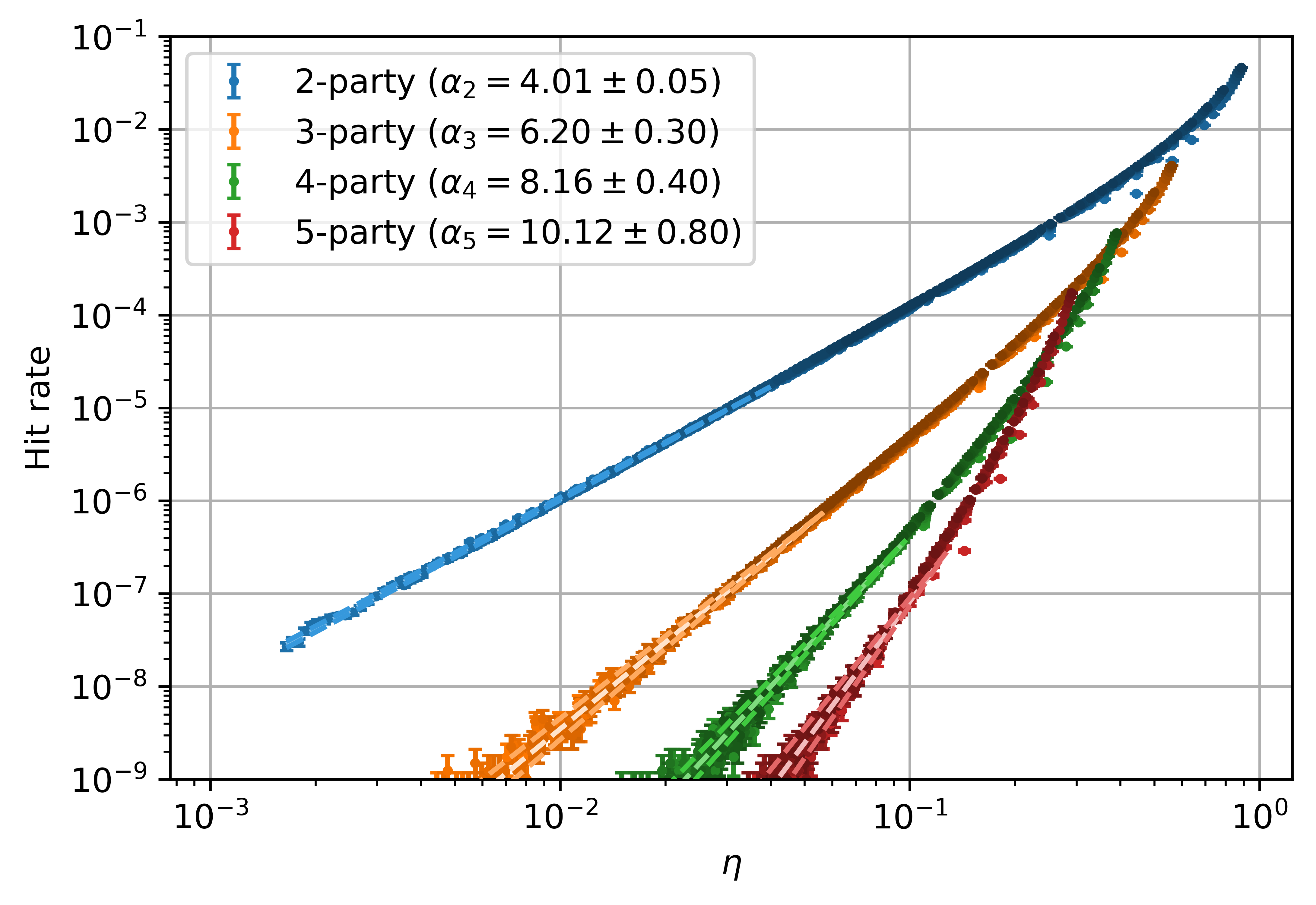}};
            \node [anchor=north west] (note) at (-0.2,0) {\small{\textbf{a)}}};
        \end{scope}
        \begin{scope}[xshift=0.45\columnwidth]
            \node[anchor=north west,inner sep=0] (image_a) at (0,0)
            {\includegraphics[width=0.4\columnwidth]{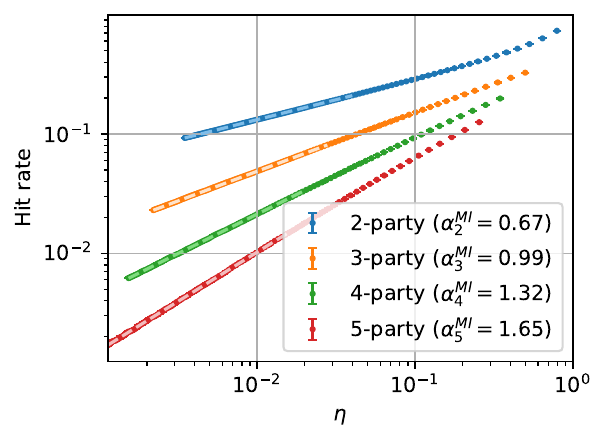}};
            \node [anchor=north west] (note) at (-0.2,0) {\small{\textbf{b)}}};
        \end{scope}
    \end{tikzpicture}
    \vspace*{-0.4cm}
    \caption{(a) $k$-party entanglement up to 5 parties over the generalized anharmonic ratio $\eta$. Subregions are 2-16 sites long, with separations up to 48 sites, and darker-colored points correspond to larger width subregions. (b) $k$-party mutual information over $\eta$, for width 8 subregions.}
    \label{fig:gme_over_x}
\end{figure}

In Figure~\ref{fig:gme_over_x}a, we measure the $k$-party entanglement, for $2 \leq k \leq 5$ subregions of equal width and even spacing, for a 512-site circuit of depth 1024, for $4\times 10^9$ circuit iterations. For 2 parties, the distance scale to plot the entanglement over is the conformal cross-ratio $\eta = \frac{w_1 w_2}{x_{12}y_{12}}$, where $w_i$ is the width of interval $i$, $x_{ij}$ is the distance from the left endpoint of interval $i$ to the left endpoint of interval $j$, and $y_{ij}$ is the same for right endpoints. All distances $x_{ij},y_{ij}$ and widths $w_i$ are given in terms of chord lengths ${\rm ch}(x)=\frac{N}{\pi}\sin\left(\frac{\pi x}{N}\right)$, due to the periodic boundary conditions.
In this form, $\eta$ acts like the square of the ratio between the geometric mean of the subregion widths and the geometric mean of the subregion distances. We therefore generalize $\eta$ to $k$ subregions by maintaining this rule:
\begin{gather*}
    \eta = \frac{\left(\prod_i w_i\right)^{\frac{2}{k}}}{\left(\prod_{i < j} x_{ij}\right)^{\frac{2}{k(k-1)}}\left(\prod_{i < j} y_{ij}\right)^{\frac{2}{k(k-1)}}}
\end{gather*}
Over the effective distance scale $\lambda$, we have $\eta \sim \lambda^{-2}$, therefore the entanglement scales as $\eta^{\frac{\alpha_k}{2}} = \eta^k$ in the $\eta \rightarrow 0$ limit.

Technically, the quantity we are measuring is not the total GME but the genuine \textit{network} multipartite entanglement~\cite{Navascues2020}. The total GME would also count cases of ``indirect" $k$-party entanglement, where the output of the circuit does not contain a $k$-party cat state over the subregions, but each subregion is nonetheless connected to the rest through shared lower-party entanglement. These connections can be consolidated into a single $k$-party cat state, but only if classical communication between all parties is allowed. 
However, all such cases of indirect entanglement decay more rapidly over distance than $\alpha_k$. Therefore, the inclusion of indirect $k$-party entanglement will make any measurement of $\alpha_k$ \textit{less} accurate, by exacerbating the short-range errors. We cover the case of indirect $k$-party entanglement and its scaling in Appendix~\ref{app:indirect_entanglement}.

Given a successful instance of $k$-party entanglement between subregions $A_1, ..., A_k$, we can also record what features of the circuit realization led to that entanglement. Generically, we start with a method to describe different circuits in an ensemble $\{\varepsilon\}$ by a graph $G$. The vertices in this graph correspond to degrees of freedom at specific points in time, while edges correspond to circuit elements and have realization-dependent weights $W_\varepsilon$, representing the strength of correlations between neighboring vertices. For the MOC, our graph is the square lattice, and the weight of each edge is the bond openness in the equivalent 1+1D percolation model - a vertical edge will have weight 0 if an $X$ measurement was performed there, 1 if there was no measurement, and vice versa for horizontal edges and $ZZ$ measurements. In order to impose some homogeneity between the two different edges, we perform a convolution by averaging the weight of each horizontal edge with the average weight of the two nearest vertical edges on the next layer, and similarly for the vertical edges. For a more general RQC with measurements, $W_\varepsilon$ could simply be an indicator function that detects measurements on a particular site and layer.

From $W_\varepsilon$ we can define a $k$-party entanglement-weighted graph
\begin{gather}
    \overline{W}[E_k](A_1, ..., A_k)\equiv \langle  E_k(A_1, ... ,A_k | \varepsilon) \cdot W_\varepsilon \rangle_\varepsilon
\end{gather}
The idea is to use the entanglement measure $E_k$ as a statistical weight for $W_\varepsilon$, which in turn captures local entangling/dis-entangling properties of the circuit realization $\varepsilon$. In this way, the $k$-party entanglement-weighted graph describes the local structure of the average circuit realization that has $k$-party entanglement between specific subregions.
In Fig.~\ref{fig:average_cluster}, we show examples of $\overline{W}[E_k]$ for $k=2,3,4$ in the MOC. Remarkably, this graph looks very similar to the clusters described in Section~\ref{section:subadditivity}. We conjecture that $\overline{W}$ can be used to show the connected clusters leading to $k$-party GME (or any other measure of correlations, including $k$-party MI) in other systems as well.

\begin{figure}[h]
    \centering
    \begin{tikzpicture}
        \begin{scope}
            \node[anchor=north west,inner sep=0] (image_a) at (0,0)
            {\includegraphics[width=0.33\columnwidth]{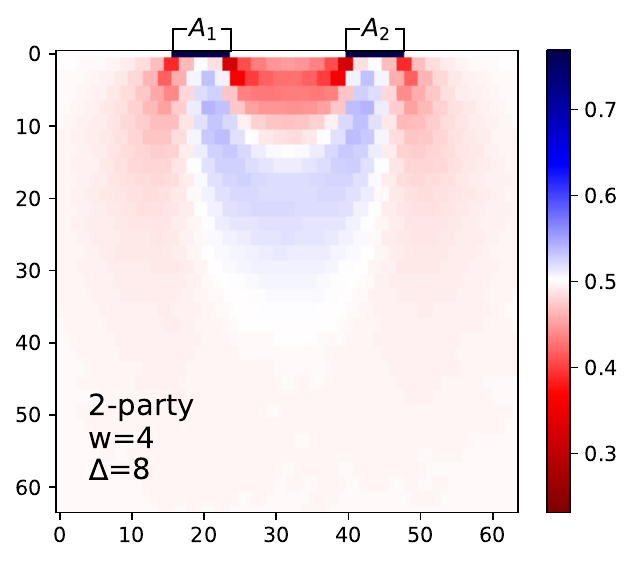}};
            \node [anchor=north west] (note) at (-0.3,0) {\small{\textbf{a)}}};
            
        \end{scope}
        \begin{scope}[xshift=0.33\columnwidth]
            \node[anchor=north west,inner sep=0] (image_a) at (0,0)
            {\includegraphics[width=0.33\columnwidth]{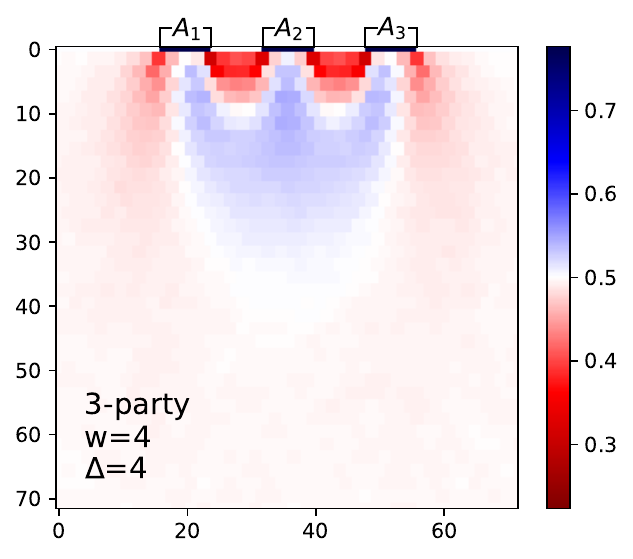}};
            \node [anchor=north west] (note) at (-0.3,0) {\small{\textbf{b)}}};
        \end{scope}
        \begin{scope}[xshift=0.66\columnwidth]
            \node[anchor=north west,inner sep=0] (image_a) at (0,0)
            {\includegraphics[width=0.33\columnwidth]{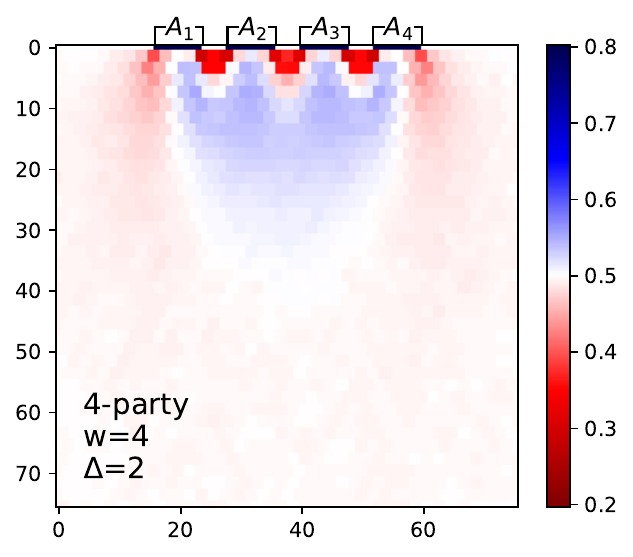}};
            \node [anchor=north west] (note) at (-0.3,0) {\small{\textbf{c)}}};
        \end{scope}
    \end{tikzpicture}
    \vspace*{-0.4cm}
    \caption{Entanglement-weighted $\overline{W}[E_k]$, given entanglement between $k$ subregions (indicated by the boxes at the top of each figure) of width $w=4$ and different spacings $\Delta$. The weights $W_\varepsilon$ correspond to the average bond openness in the equivalent percolation model (i.e. the density of $ZZ$ measurements and/or the sparsity of $X$ measurements), convolved over a $ZZ$ and $X$ measurement layer. 
    }
    \label{fig:average_cluster}
\end{figure}

\subsection{$k$-party mutual information} \label{section:k_party_mi}
The $k$-party mutual information can be defined\footnote{There are alternative definitions~\cite{Guo2023,Kumar2025} of the $k$-party mutual information, however, this is the only common definition that does not include any lower-party correlations. For example, if the density matrix is factorizable over a single nontrivial bipartition of the subsets, this is the only measure of $k$-party mutual information that equals zero.} over subregions $A_1, ..., A_k$ as
\begin{gather}\label{eq:k_party_mi}
    I_k(A_1, ..., A_k) = \sum_{j=1}^k (-1)^{j+1} \bbbbbbbb\sum_{B_1,...,B_j \subseteq \{A_1, ..., A_k\}} \bbbbbbbb S(B_1 ... B_j).
\end{gather}
If a cat state contains support on every subregion, as well as the exterior, then $S(B_1 ... B_j) = \ln 2$ for every set of subregions $B_1 ... B_j$. This will always contribute $\ln 2$ to the overall mutual information, and forms its dominant component.
Additionally, there is a correction from cat states living exclusively in the subregions (this is in contrast to the $k$-party GME, where the \textit{only} contributions come from these types of cat state). For these states $S(A_1 ... A_k) = 0$ instead of $\ln 2$. By (\ref{eq:k_party_mi}), the overall contribution to the mutual information from these states is $2 \ln 2$ for even $k$ and zero for odd $k$. 

\begin{figure}[h]
    \centering
    \begin{tikzpicture}
        \begin{scope}
            \node[anchor=north west,inner sep=0] (image_a) at (0,0)
            {\includegraphics[height=0.3\columnwidth]{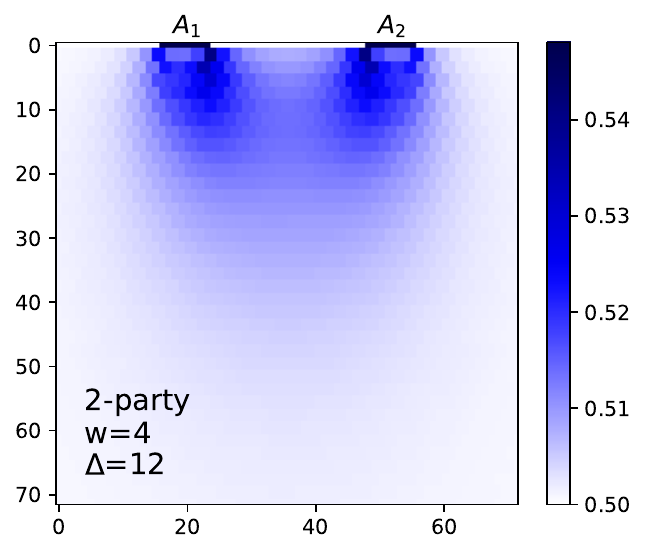}};
            \node [anchor=north west] (note) at (-0.3,0) {\small{\textbf{a)}}};
        \end{scope}
        \begin{scope}[xshift=0.4\columnwidth]
            \node[anchor=north west,inner sep=0] (image_a) at (0,0)
            {\includegraphics[height=0.3\columnwidth]{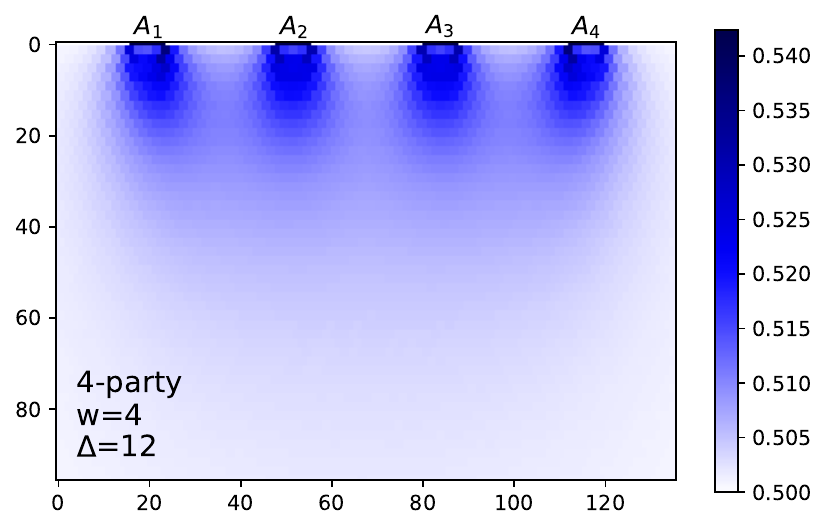}};
            \node [anchor=north west] (note) at (-0.3,0) {\small{\textbf{b)}}};
        \end{scope}
    \end{tikzpicture}
    \vspace*{-0.4cm}
    \caption{MI-weighted graph $\overline{W}[I_k]$, given an instance of nonzero mutual information between $k=2,4$ subregions of width $w=4$ and spacing $\Delta=12$. $W_\varepsilon$ is calculated the same way as in Fig.~\ref{fig:average_cluster}. 
    }
    \label{fig:average_cluster_mi}
\end{figure}

The fact that connected clusters in the equivalent percolation model are allowed to extend outside the subregions and still contribute to the mutual information makes its behavior significantly different from that of GME. The MI-weighted graph $\overline{W}[I_k]$ has a more trivial structure, as all low-correlation conditions have been relaxed (Fig.~\ref{fig:average_cluster_mi}). In addition, the correlation functions governing $k$-party mutual information use different fields than those governing $k$-party entanglement.
Based on previously determined correlation functions of the underlying CFT~\cite{Cardy1992,Kleban2006}, we expect the $k$-party mutual information to be governed by a correlation function of $2k$ copies of a boundary operator that is equivalent to the minimal model field $\phi_{1,2}(x)$. This operator corresponds~\cite{Cardy1992} to an exchange in boundary conditions between free and fixed, so there is one operator at the beginning and end of each subregion. The two copies of $\phi_{1,2}(x)$ at the edges of each subregion are then converted, through the operator product expansion, into one copy of $\phi_{1,3}(x)$ for each subregion (as well as a unit operator which only counts clusters that connect on the complement on the subregions~\cite{Kleban2006}). 
Since $h_{1,3} = 1/3$, we therefore expect~\cite{Lewellen1992} a boundary field correlation function that scales as $x^{-k/3}$. The numerical results of Fig.~\ref{fig:gme_over_x}b, run on a $N=4096$, depth 8192 circuit, confirm this prediction, with $\alpha_k^{\rm MI} \approx 0.33k$.

\section{Multiparty entanglement in a 2+1D measurement-only circuit} \label{section:3d}

\begin{figure}[h]
    \centering
    \begin{tikzpicture}
        \begin{scope}
            \node[anchor=north west,inner sep=0] (image_a) at (0.0,0)
            {\includegraphics[width=0.39\columnwidth]{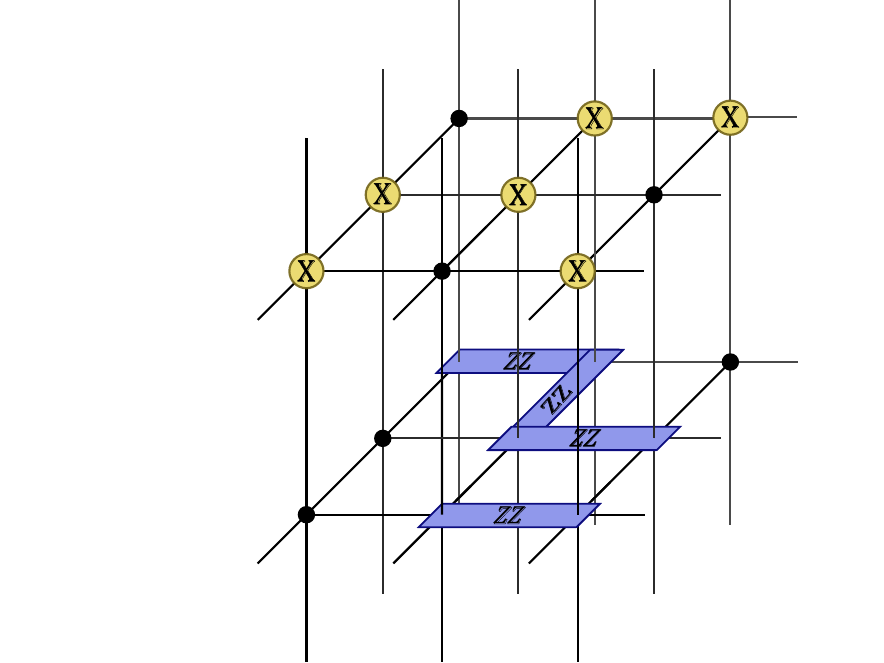}};
            \node [anchor=north west] (note) at (1.2,0) {\small{\textbf{a)}}};
        \end{scope}
        \begin{scope}[xshift=0.51\columnwidth]
            \node[anchor=north west,inner sep=0] (image_a) at (0,0)
            {\includegraphics[width=0.45\columnwidth]{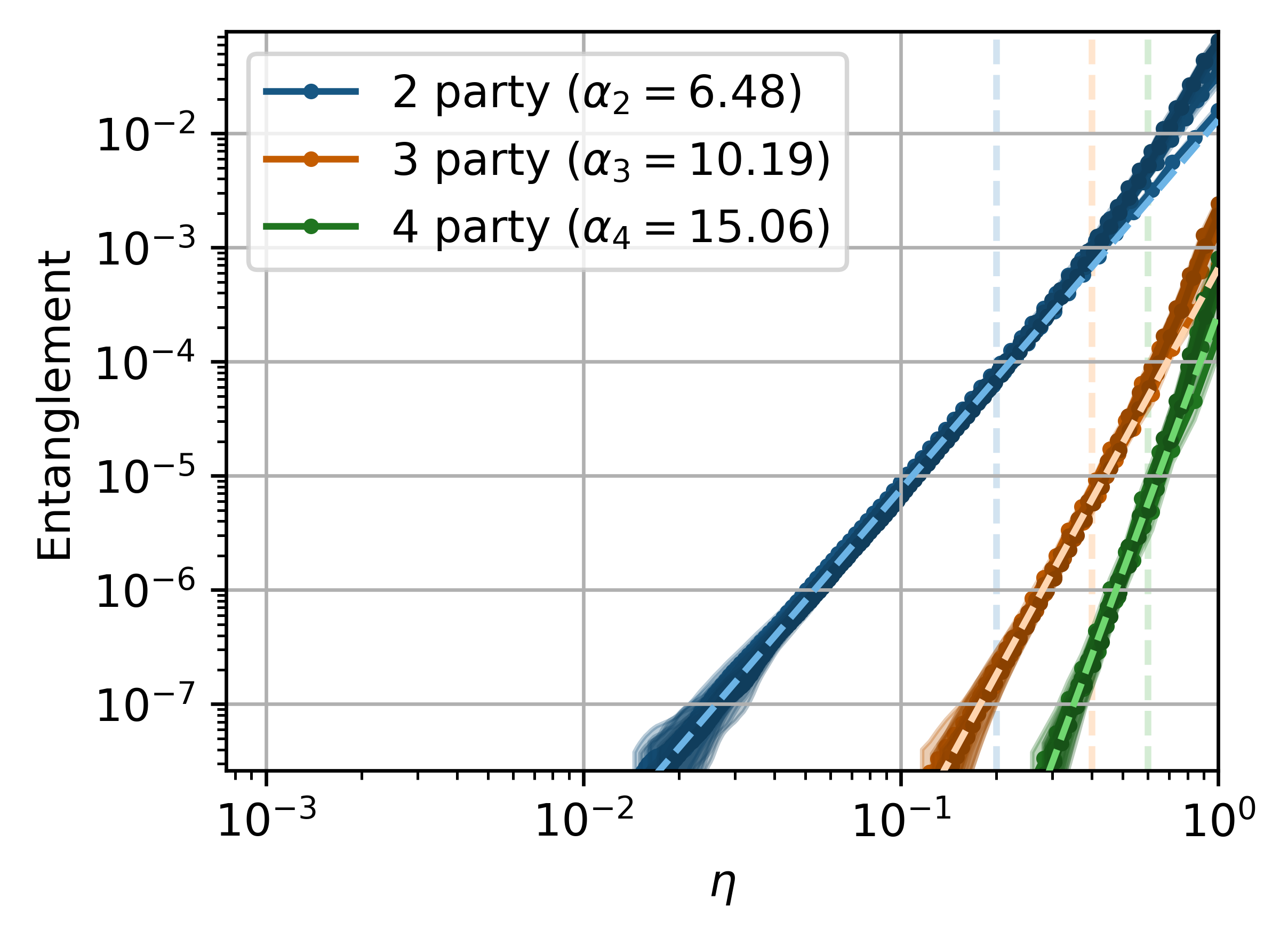}};
            \node [anchor=north west] (note) at (-0.1,0) {\small{\textbf{b)}}};
        \end{scope}
    \end{tikzpicture}
    \begin{tikzpicture}
        \begin{scope}[xshift=0\columnwidth]
            \node[anchor=north west,inner sep=0] (image_a) at (0,0)
            {\includegraphics[width=0.45\columnwidth]{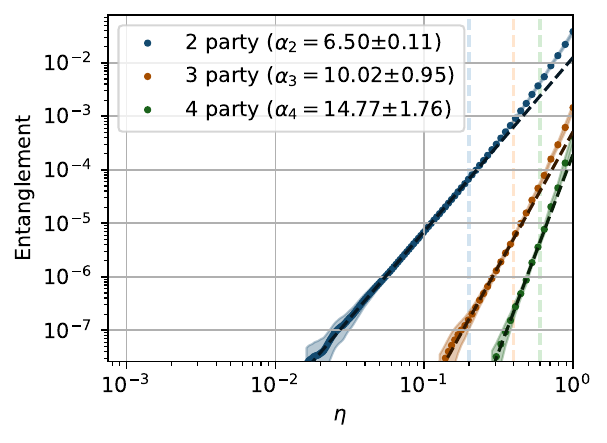}};
            \node [anchor=north west] (note) at (-0.1,0) {\small{\textbf{c)}}};
        \end{scope}
        \begin{scope}[xshift=0.5\columnwidth]
            \node[anchor=north west,inner sep=0] (image_a) at (0,0)
            {\includegraphics[width=0.43\columnwidth]{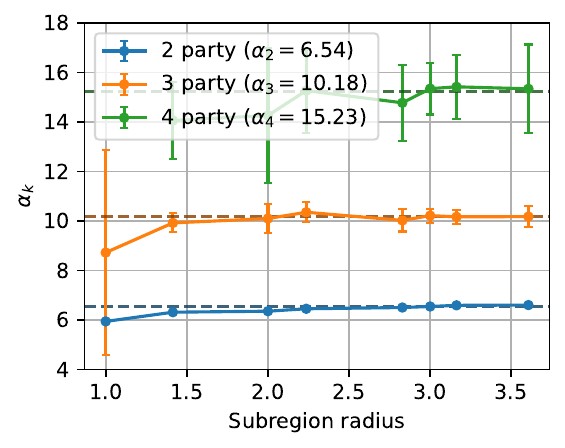}};
            \node [anchor=north west] (note) at (-0.1,0) {\small{\textbf{d)}}};
        \end{scope}
    \end{tikzpicture}
    \vspace*{-0.4cm}
    \caption{(a) Two layers (a $ZZ$ measurement layer and an $X$ measurement layer) of the 2+1D measurement-only circuit. (b) Graph of angle-averaged hit rates for all parties, at all subregion radii above $\sqrt{2}$, for the $L=256$, depth 512 MOC. Darker points indicate larger radii, and shaded areas indicate statistical error from shot noise and performing the angle average. $\alpha_k$ values are taken from fitting data at all subregions. Fitting is only done over $\eta > 0.2, 0.4, 0.6$ for 2, 3 and 4 parties respectively, to avoid short-distance effects. (c) Graph of angle-averaged hit rates of all parties for the subregion inside radius $\sqrt{8}$ (i.e. the 21-site subregion). (d) Estimate of $\alpha_k$ for each subregion of a specific radius. $\alpha_k$ values (dark dotted lines) are taken by averaging the estimates at the five largest radii.}\label{fig:gme_3d}
\end{figure}

In order to make a 2+1D measurement-only circuit that maps to the cubic 3D bond percolation model, we start with an array of qubits in a square lattice with periodic boundary conditions in both the horizontal and vertical direction. On each layer, we apply a ZZ measurement with probability $p$ to each bond on the square lattice, then apply an X measurement with probability $1-p$ to each site (Fig.~\ref{fig:gme_3d}a). This way, the stabilizers of the circuit, and hence the entanglement, are determined by connected clusters in the cubic 3D bond percolation model, which reaches a critical point~\cite{Wang2013} at approximately $p = 0.248812$.

We measure $k$-party entanglement for the 2+1D measurement-only circuit at the critical probability, over an $L\times L$ square lattice to depth $2L$, with $48 \leq L \leq 256$. For each lattice distance $r \leq \sqrt{13}$, we consider the circular subregion centered at the origin $(0,0)$ with radius $r$ consisting of all points fully contained in the interior of the circle. We measure the entanglement between this subregion and $k-1$ others of the same shape, displaced by vectors $(x,y)$ and $(-y,x)$ (see Appendix~\ref{app:2plus1d_kparty_measuring} for details on the exact position of each subregion).

We find evidence that the entanglement has an emergent radial symmetry in the long-distance limit - contours of equal entanglement tend to be circular even if the subregion itself does not have a circular shape (see Fig.~\ref{fig:top_down_entanglement} in Appendix~\ref{app:2plus1d}). The actual definition of the distance scale is complicated by the two-dimensional periodic boundary conditions. We define $\eta$ using the Euclidean distance over chord lengths on the torus:
\begin{gather*}
    \eta = \frac{{\rm ch}(2r)^2}{{\rm ch}(x)^2 + {\rm ch}(y)^2}
\end{gather*}
where ${\rm ch}(z) = \frac{L}{\pi}\sin\left(\frac{\pi z}{L}\right)$ is the chord length over a single dimension. With this metric, graphs of hit rates over $\eta$ at specific subregions tend to vary little over angle, making the angle-averaged hit rate (see Appendix~\ref{app:angle_averaged_entanglement} for the precise definition of this quantity) have relatively little statistical error (Fig.~\ref{fig:gme_3d}b). 

The $k$-party angle-averaged entanglement for each subregion size appears to follow a power law over $\eta$, with the prefactor (i.e. the $\eta \rightarrow 1$ limit of the power law) varying over subregion radius. This is expected, as there are slight differences in the shape of the subregion at each radius, determined by the set of lattice points enclosed by each subregion. A finite size scaling analysis (see Appendix~\ref{app:finite_size_scaling}) suggests that $\alpha_2 \approx 6-6.5, \alpha_3 \approx 8.5-10$ in the limit of infinite system size.

While the 3-party entanglement exponent appears to be slightly superlinear compared to the 2-party exponent, the range of $\eta$ that we fit the 3-party entanglement over was far narrower than the range of the 2-party entanglement fit. Fig.~\ref{fig:gme_3d}b,c) shows that entanglement tends to decay more rapidly over distances at shorter ranges, resulting in a steeper curve at large $\eta$. Therefore, there is a reason to expect that the shorter range of 3-party entanglement leads to a consistently overestimated exponent. The 4-party entanglement is an even more extreme example of this bias.

\begin{figure}[h]
    \centering
    \begin{tikzpicture}
        \begin{scope}
            \node[anchor=north west,inner sep=0] (image_a) at (0,0)
            {\includegraphics[width=0.35\columnwidth]{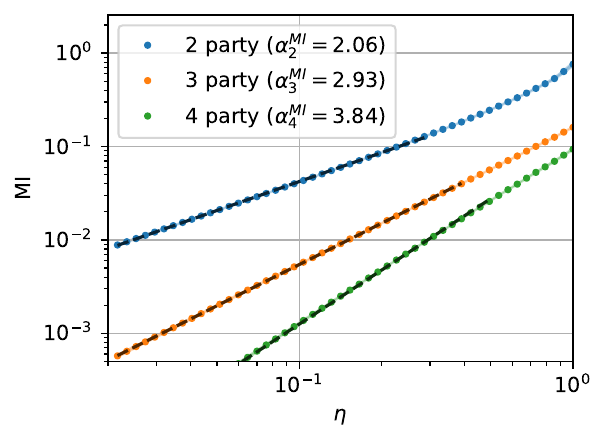}};
            \node [anchor=north west] (note) at (0.0,0) {\small{\textbf{a)}}};
        \end{scope}
        \begin{scope}[xshift=0.38\columnwidth]
            \node[anchor=north west,inner sep=0] (image_a) at (0,0)
            {\includegraphics[width=0.35\columnwidth]{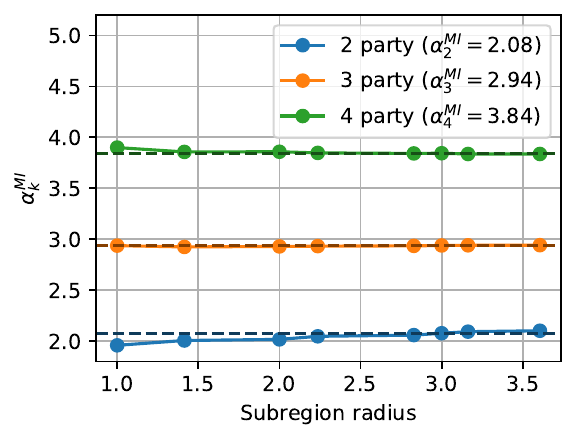}};
            \node [anchor=north west] (note) at (-0.1,0) {\small{\textbf{b)}}};
        \end{scope}
    \end{tikzpicture}
    \vspace*{-0.4cm}
    \caption{(a) 2-4 party angle-averaged mutual information on the $L=256$, depth 512 MOC, for the subregion inside radius $\sqrt{8}$. The estimate of $\alpha_k$, derived from the fit, is in parentheses. (b) Estimates of $\alpha_k^{\text{MI}}$ for each subregion of a specific radius. $\alpha_k$ values (dark dotted lines) are taken by averaging the estimates at the five largest radii. }
    \label{fig:mi_3d}
\end{figure}

We can also numerically measure the $k$-party mutual information between different subregions. Using (\ref{eq:k_party_mi}), this corresponds to the rate of percolation clusters intersecting the surface at all of the specific subregions, even if they intersect the surface outside of those subregions.
We measure an $\alpha_k^{\text{MI}}$ that is rather close to $k$ itself, for all values of $k$ measured. Whether $\alpha_k^{\text{MI}}$ is exactly equal to $k$ or follows a more subtle dependence, and whether $\alpha_k^{\text{MI}}$ is linear in $k$ or strictly sublinear, remains to be seen.

\section{Conclusion}

We have obtained theoretical and numerical evidence that the critical 1+1D measurement-only circuit of Refs.~\onlinecite{Nahum2020,Lang2020,Sang2020} contains genuine $k$-party entanglement with power-law exponent $\alpha_k = 2k$. These exponents easily satisfy the classical dominance (\ref{eq:classical_dominance}) and monotonicity (\ref{eq:monotonicity_of_entanglement}) conjectures from Section~\ref{section:entanglement_exponent_conjectures}, while saturating the subadditivity conjecture (\ref{eq:subadditivity_of_entanglement}) in the 1+1D system. We also have numerical evidence that the natural 2+1D extension contains genuine $k$-party entanglement with power-law exponent $\alpha_k \approx \akcubed$. As far as we can tell, this specific property has not been measured for the 3D percolation model yet, so the results of this study may be relevant to future works on the 3D percolation model, beyond the specific problem of $k$-party entanglement.

These exponents are not universal to all random circuit ensembles. In fact, common ensembles like the Clifford and Haar circuits - while capable of exhibiting long-range GME - are known to have far greater exponents for $2$-party entanglement~\cite{Sang2020,Avakian2024}. 
If these circuits can still be described by a CFT, their entanglement cannot be equal to a correlation function of $k$ stress tensors like the measurement-only circuit. In fact, if these systems are \textit{strictly} subadditive, as the numerical evidence of Ref.~\cite{Avakian2024} suggests, they cannot be represented by a correlation function of $k$ copies of the same field at all. It would therefore be useful to know what correlation functions actually correspond to genuine $k$-party entanglement in these ensembles, and whether there is any connection to the correlation function of the measurement-only circuit. 
The shorter-range entanglement of the Clifford and Haar ensembles also raises the question of whether the measurement-only circuit has the longest-range bipartite or tripartite entanglement of any ensemble that is not fine-tuned. 
In Appendix~\ref{app:other_circuits} we provide two random circuit ensembles based on simple generalizations of the MOC that have $\alpha_k < 2k$, but are more fine-tuned than the original MOC. They are not scale invariant and make concessions to their structure or the nature of their gates that could be hard to justify. However, they still leave the possibility that a different version of the circuit ensemble can achieve longer-range entanglement without making such compromises.

Finally, we are interested in expanding and improving the entanglement clusters of section~\ref{section:subadditivity} in future work, such as by studying entanglement-weighted graphs outside the critical point of a system. In addition, the cluster model suggests that genuine $k$-party entanglement between different subregions in a circuit can be described in terms of connections that extend from the top layer into the bulk. This implies there could be a measure for how subregions at different layers can be connected with each other, such that entanglement between subregions on the same layer can be described in terms of combinations of these off-layer measures. In the MOC this measure is clear: two points at different layers are connected if they are connected in the percolation model. A proper general description of this off-layer measure for general systems can help strengthen the entanglement cluster picture. 

\section*{Acknowledgments}
We thank S.~Avakian for bringing to our attention the measurement-only model, and Y.~Saint-Aubin for discussions regarding classical percolation. We also thank Timothy Hsieh and Paul Roux for useful discussions.
W.W.-K. is supported by a grant from the Foundation Courtois, a Chair of the Institut Courtois, a Discovery Grant from NSERC, and a Canada Research Chair.

\pagebreak 

\begin{appendices}

\renewcommand{\thesubsection}{\arabic{subsection}}

\section{Output of the measurement-only circuit}\label{app:output_of_moc}
We will start with some definitions of the percolation model, given a circuit $C$ on sites $S$ of depth $d$ consisting of alternating $ZZ$ and $X$ measurements:
\begin{definition}
    For a set of sites $S$ and integer $d$, define the \textit{complete graph cylinder} $G_d(S)$ as the graph whose vertices are $d$ copies of $S$ (we will refer to each copy as a layer in the graph), consisting of \textit{intralayer} edges forming a complete connection of all sites within each layer, and \textit{interlayer} edges connecting a site in a particular layer to its counterparts in adjacent layers.
\end{definition}

\begin{definition}
    The corresponding \textit{percolation model of circuit $C$}, $P_{S,d}(C)$, is the bond percolation model on $G_d(S)$ where each intralayer edge is open if a $Z_i Z_j$ measurement was applied to its vertices at that particular layer, and each interlayer edge is open if no $X$ measurement was applied to the corresponding site at that layer (see Fig.~\ref{fig:moc_to_percolation}). 
\end{definition}

The percolation model $P_{S,d}(C)$ gives rise to clusters in the bulk of $G_d(S)$ connected by open edges. We next consider the intersection of these clusters with the top layer. Specifically,

\begin{definition}
    For a circuit $C$ with percolation model $P_{S,d}(C)$ on graph $G_d(S)$, a \textit{cluster surface} of $C$ is any subset of sites in $S$ corresponding to the intersection of a connected cluster in $P_{S,d}(C)$ with the final layer of the graph.
\end{definition}

As the circuit $C$ belongs to the Clifford group, its stabilizers will consist of some variety of Pauli strings. We can connect those stabilizers to the percolation model $P_{S,d}(C)$ and its cluster surfaces as follows:

\begin{lemma} \label{lemma:percolation_stabilizers}
    The stabilizers of the circuit $C$ are generated by the following operators:
    \begin{itemize}
        \item Two-site operators $Z_i Z_j$, where $i$ and $j$ are any two sites in a cluster surface of $C$,
        \item Multi-site operators $X_{i_1} ... X_{i_k}$, where $i_1...i_k$ is equal to a specific cluster surface of $C$.
    \end{itemize}
    For each such operator, either the operator itself or its negation is the valid generator, depending on the measurement outcomes of the circuit.
\end{lemma}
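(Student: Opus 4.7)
The plan is to establish the lemma by induction on the number of measurements applied so far, noting that the measurements within any given layer all commute pairwise (both $X_i$ with $X_j$ and $Z_iZ_j$ with $Z_kZ_l$), so the order inside a layer is immaterial. I would track in parallel the evolving stabilizer group of the circuit's state and the partial percolation model $P_{S,t}$ built from the first $t$ measurements, showing that at every step the generators of the former coincide with the two families listed in the lemma, indexed by the cluster surfaces of the latter.

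For the base case at $t=0$, the initial state $\bigotimes_i |+\rangle_i$ is stabilized by $\{X_i\}_{i \in S}$, while $P_{S,0}$ has no open bonds so every site forms its own singleton cluster surface; the correspondence is immediate, with no $ZZ$ generators required since no surface has size greater than one. A dimension count (for a cluster surface of size $k$: one $X$-type generator plus $k-1$ independent $ZZ$ pairs, for a total of $k$) then shows that the proposed generators always furnish exactly $|S|$ independent stabilizers, matching the fact that the post-measurement state is pure.

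For the inductive step on an $X_i$ measurement, the new interlayer bond at $i$ is closed, so if the old surface containing $i$ was $C=\{i,i_1,\ldots,i_k\}$, the updated percolation picture has a fresh singleton $\{i\}$ at the new top layer together with (if $k>0$) a surface $\{i_1,\ldots,i_k\}$ inherited from the rest of $C$. On the stabilizer side, I would introduce $\pm X_i$ as the measurement's outcome-determined generator, observe that the old $X$-type generator $X_i X_{i_1}\cdots X_{i_k}$ commutes with $X_i$ and hence can be reduced to $\pm X_{i_1}\cdots X_{i_k}$, and dispose of the anticommuting generators $\{Z_i Z_j : j\in C\setminus\{i\}\}$ by the standard trick: pick one as the replacee and multiply the others by it to obtain $ZZ$ pairs wholly inside $\{i_1,\ldots,i_k\}$. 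The resulting generators exactly match the two new cluster surfaces. For a $Z_iZ_j$ measurement, there are two subcases: if $i$ and $j$ already lie in the same surface, $Z_iZ_j$ is already stabilizing by the induction hypothesis and nothing changes (consistent with adding a redundant intracluster bond); if instead $i\in C_1$, $j\in C_2$ with $C_1\neq C_2$, the $X$-type generators $\prod_{l\in C_1}X_l$ and $\prod_{l\in C_2}X_l$ each anticommute with $Z_iZ_j$ while their product commutes, so one is replaced by $\pm Z_iZ_j$, the other is multiplied by the replacee to produce $\prod_{l\in C_1\cup C_2}X_l$, and combining $\pm Z_iZ_j$ with the surviving $ZZ$ generators inside each of $C_1, C_2$ generates every $Z_aZ_b$ with $a,b\in C_1\cup C_2$, matching the merged surface.

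The hard part is the bookkeeping rather than any conceptual leap. The main obstacle is to verify carefully that at each inductive step the new family of generators is both complete (no stabilizer is missed) and not overcomplete, which I would do by checking that the generator count changes as the cluster surface count and sizes change (splitting a size-$(k+1)$ surface into a singleton plus a size-$k$ surface preserves the count $k+1$, merging a size-$p$ with a size-$q$ surface gives $p+q$, etc.). Two minor edge cases also deserve mention: when an $X_i$ measurement acts on a surface of size $1$, the old cluster is entirely closed off and only the fresh singleton $\{i\}$ remains; and tracking the $\pm$ signs in front of each generator is mechanical once the measurement-outcome assignment is fixed, so it does not affect the structural claim.
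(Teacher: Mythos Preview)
Your proposal is correct and follows essentially the same inductive strategy as the paper: track the stabilizer group and the percolation model in parallel, show the base case matches the all-$X_i$ stabilizers with singleton clusters, verify that each $X_i$ measurement corresponds to splitting off site $i$ while each $Z_iZ_j$ measurement corresponds to merging clusters, and close with the generator count ($k-1$ independent $ZZ$'s plus one $X$-string per size-$k$ surface, summing to $|S|$). If anything, your treatment of the stabilizer-update mechanics (the ``standard trick'' of multiplying anticommuting generators, and the two subcases for $Z_iZ_j$) is more explicit than the paper's, which phrases the inductive step more as a consistency check.
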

\begin{proof}
    We can prove this lemma through induction, by analyzing how the stabilizer group evolves (i.e. which sets of operators become valid or invalid generators of the stabilizer group) after applying an $X_i$ or $Z_i Z_j$ measurement. Our initial state has stabilizers $\langle X_i \rangle = +1 \, \forall \, i$, consistent with $N$ unconnected sites in the percolation model. 

    After applying an $X_i$ measurement on site $i$, we have that $\pm X_i$ itself is now a valid stabilizer (depending on the value of the measurement), while $\pm Z_i Z_j$ cannot be a valid stabilizer, for any site $j$. This is consistent with breaking up any clusters connecting site $i$ in the percolation model.
    
    On the other hand, if we apply a $Z_i Z_j$ measurement on sites $i,j$, then $\pm Z_iZ_j$ itself immediately becomes a valid stabilizer (again depending on the value of the measurement). For an operator $\pm X_{a_1}... X_{a_k}$ to remain a valid stabilizer, that operator must commute with $Z_i Z_j$ - therefore, the set $\{a_1 ... a_k\}$ must either contain both $i$ and $j$, or neither. If we had two distinct clusters $\{a_1 ... a_k\}, \{b_1 ... b_\ell\}$ such that $i \in \{a_1 ... a_k\}, j \in \{b_1 ... b_\ell\}$, with $X_{a_1} ... X_{a_k}$ and $X_{b_1} ... X_{b_\ell}$ being valid stabilizers before the measurement, then neither $X_{a_1}...X_{a_k}$ nor $X_{b_1}... X_{b_\ell}$ remain valid stabilizers after the measurement, but their product $X_{a_1} ... X_{b_\ell}$ remains a valid stabilizer. This is consistent with merging two percolation model clusters $\{a_1 ... a_k\},\{b_1 ... b_\ell\}$ through an open bond.

    Because both measurements are consistent with their respective operation in the percolation model, we obtain the two sets of stabilizers from the lemma. If a cluster in the percolation model touches the top layer at $k$ sites, that gives us $k-1$ independent stabilizers of the form $Z_i Z_j$, and $1$ independent stabilizer of the form $X_{i_1} ... X_{i_k}$. Therefore, the number of independent generators defined by the cluster is equal to its size on the top layer, so the number of independent generators defined by the lemma over the whole product state is equal to $|S|$. Hence, these generators form the entire stabilizer group.
\end{proof}
We can immediately identify the states that must be stabilized by such operators as precisely the cat states over each cluster surface, where the precise nature of each cat state (i.e. the $\ket{\upsilon_i}$ states and the superposition phase) depends on the measurement outcomes.

\section{CFT description of $k$-party entanglement}\label{app:temperley_lieb}

\begin{figure}[h]
    \centering
    \vspace*{-0.8cm}
    \begin{tikzpicture}
        \begin{scope}
            \node[anchor=north west,inner sep=0] (image_a) at (0,0)
            {\includegraphics[width=0.79\columnwidth]{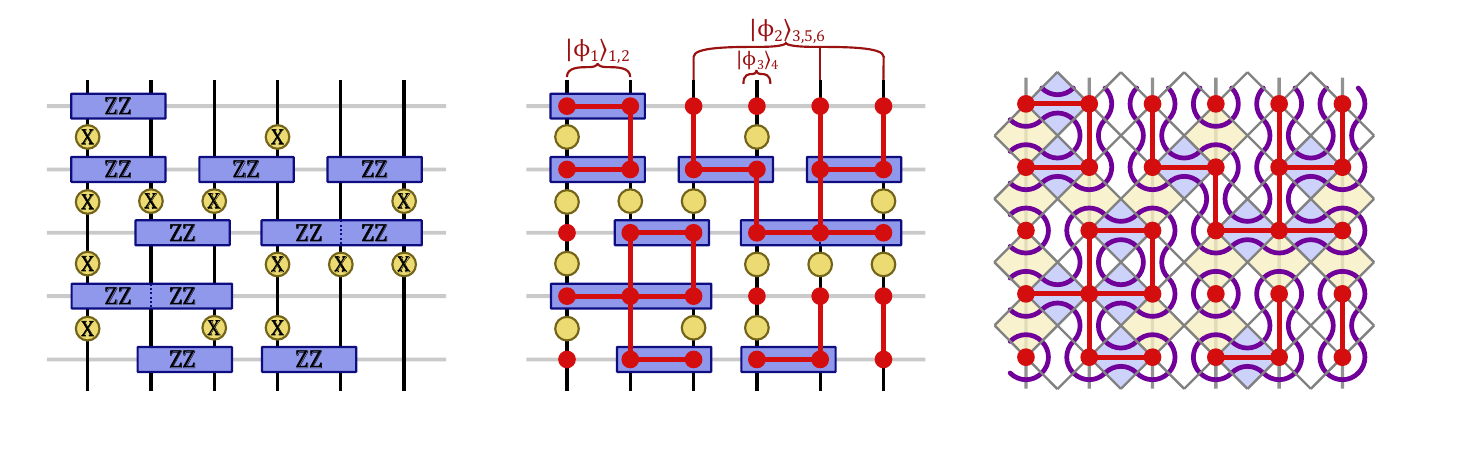}};
            \node [anchor=north west] (note) at (0.1,-0.7) {\small{\textbf{a)}}};
            \node [anchor=north west] (note) at (4.55,-0.7) {\small{\textbf{b)}}};
            \node [anchor=north west] (note) at (9.0,-0.7) {\small{\textbf{c)}}};
        \end{scope}
    \end{tikzpicture}
    \vspace*{-0.9cm}
    \caption{Obtaining the loop model of a realization of a 1+1D measurement-only stabilizer circuit. (a,b), as in Fig.~\ref{fig:moc_to_percolation}, cover the original circuit realization $C$ and its percolation model $P_{S,d}(C)$. In (c), the percolation model $P_{S,d}(C)$ \textit{(red)} is mapped to a loop model generated by the Temperley-Lieb algebra \textit{(purple)}, using the pictoral conversions of $X$ measurements \textit{(yellow)} and $ZZ$ measurements \textit{(blue)} to $\tlx$ symbols, while identities become $\tli$ symbols.}
    \label{fig:moc_to_loop_model}
\end{figure}

The percolation model of the system (Fig.~\ref{fig:moc_to_loop_model}b) can be converted~\cite{Sang2020b,Sang2020} to a loop model (Fig.~\ref{fig:moc_to_loop_model}c) where each cluster is enclosed in loops that are generated by $\tlx$ and $\tli$ symbols. The loop model can be formalized in terms of the Temperley-Lieb algebra~\cite{Temperley1971,Koo1993}. This algebra is generated by elements $e_j$ and weight $\tau$ such that
\begin{align}
    e_j^2 &= \tau e_j \qquad 
    &e_j e_i e_j &= e_j \quad  \text{for } |i-j| = 1 \qquad
    &e_i e_j &= e_j e_i \quad \text{for } |i-j| > 1
\end{align}
We can identify $e_j$ as the $\tlx$ symbol, and $\tau$ is the weight of one loop, which is equal~\cite{Martin1988} to the square root of the number of states $Q$ in the equivalent Potts model.

The Temperley-Lieb algebra gives a method to convert cluster surface conditions into correlation functions of $\tlx$ operators. Specifically, for $k$ parties, the quantity
\begin{gather}
    \dev{}{\tau}\Bigg|_{\tau=1} \sum_{j=1}^{k} (-1)^{k-j}\!\!\!\!\!\!\!\!\!\sum_{a_1 ... a_j \in \{i_1 ... i_k\}} \!\!\!\!\!\!\!\!\langle \tlx_{2a_1} ... \tlx_{2a_j}\rangle =  \dev{}{\tau}\Bigg|_{\tau=1}\left\langle \left(\tlx_{2i_1}\!-1\right) ... \left(\tlx_{2i_k}\!-1\right)\right\rangle - (-1)^k\label{eq:general_party_expectation}
\end{gather}
is nonzero iff $i_1, ... ,i_k$ form a cluster surface of $C$.
The proof of this claim is in subsection~\ref{app:entanglement_op_indicator}. Next, we can relate the correlation function of Temperley-Lieb algebra operators to a correlation function of fields on the CFT.
Ref.~\onlinecite{Koo1993} defines a quantity $t_1$ related to the stress energy density in the continuum limit:
\begin{gather}
    t_1 \sim -(\tlx_{2j} + \tlx_{2j-1}) + \text{constant}\\
    T + \overline{T} \propto T_{xx}-T_{yy} \sim t_1 - \langle t_1 \rangle
\end{gather}
We show in section~\ref{app:tl_to_sets} that the $k$-party entanglement between sites $i_1, ..., i_k$ has the same long-distance scaling as a product of $k$ factors of $t_1 - \langle t_1 \rangle$, and therefore has the same long-distance scaling as a product of $k$ factors of the stress-energy tensor. As the stress-energy tensor has scaling dimension 2, if we sort the sites $i_1< ...< i_k$ and there is some characteristic distance scale $\lambda$ such that $i_{j+1}-i_j$ is proportional to $\lambda$, then the $k$-party entanglement between sites $i_1, ..., i_k$ goes as $\lambda^{-2k}$; hence $\alpha_k=2k$.

If we replace each site by an interval, and instead define $\lambda$ as the ratio of interval spacing over interval width, the $\lambda^{-2k}$ scaling should still hold, due to the OPE of boundary stress tensors at $c=0$:
\begin{gather}
    T(a_1) T(a_2) \rightarrow \frac{T(a_1)}{(a_2-a_1)^2} + ...
\end{gather}
Using this, given $k$ intervals $A,B,...$ with endpoints $(a_1, a_2), (b_1, b_2),...$ respectively, we can reduce a $2k$-point correlation function of the form $\langle T(a_1) T(a_2) T(b_1) T(b_2)...\rangle$ to a $k$-point correlation function of the form $\langle T(a_1) T(b_1) ... \rangle$ up to factors of the subregion widths $|a_1 - a_2|, |b_1 - b_2|,$ etc. 

\subsection{Connection between Temperley-Lieb correlation function and entanglement}
\label{app:entanglement_op_indicator}
Here we will review the claim that for $k$ parties, the quantity
\begin{gather*}
    \dev{}{\tau}\Bigg|_{\tau=1} \sum_{j=1}^{k} (-1)^{k-j}\!\!\!\!\!\!\!\!\!\sum_{a_1 ... a_j \in \{i_1 ... i_k\}} \!\!\!\!\!\!\!\!\langle \tlx_{2a_1} ... \tlx_{2a_j}\rangle =  \dev{}{\tau}\Bigg|_{\tau=1}\left\langle \left(\tlx_{2i_1}\!-1\right) ... \left(\tlx_{2i_k}\!-1\right)\right\rangle - (-1)^k
\end{gather*}
is zero for all\footnote{This also includes unphysical loop model configurations that are not planar.} loop model configurations, except the ones satisfying the connection condition $[i_1 ... i_k]$.
Firstly, from Ref.~\onlinecite{Sang2020}, we have that the correlation function derivative
\begin{gather}\label{eq:corr_fn_expression}
    \dev{}{\tau}\bigg|_{\tau=1} \langle \tlx_{J_1} ... \tlx_{J_\ell}\rangle
\end{gather}
is equal to the expected number of loops gained by adding $\tlx_{J_1} ... \tlx_{J_\ell}$ to the end of the model. Each new loop that could potentially be created by the $\tlx_{J_1} ... \tlx_{J_\ell}$ operators corresponds to a connection condition - if a circuit obeys a connection condition $[i_1 ... i_k]$, then that connection condition will contribute a unit to the expression (\ref{eq:corr_fn_expression}) iff $\{i_1 ... i_k\} \subseteq \{J_1 ... J_\ell\}$.

Suppose the state satisfies a connection condition $[j_1 ... j_\ell]$. If the set $\{j_1 ,...,j_\ell\}$ is not completely contained in $\{i_1, ..., i_k\}$, the connection will provide no contributions to the correlation functions at all. Therefore, we will assume that $\{j_1, ..., j_\ell\} \subseteq \{i_1, ..., i_k\}$, and define the (possibly empty) set $\{h_1, ..., h_m\} = \{i_1, ..., i_k\} \setminus \{j_1, ..., j_\ell\}$. 

Then, under the connection condition,
\begin{align}
    \dev{}{\tau}&\Bigg|_{\tau=1}\left\langle \left(\tlx_{2i_1}\!-1\right) ... \left(\tlx_{2i_k}\!-1\right)\right\rangle - (-1)^k =  \dev{}{\tau}\Bigg|_{\tau=1} \left \langle \prod_{a=1}^\ell \left(\tlx_{2j_a}\!-1\right) \prod_{b=1}^m \left(\tlx_{2h_b}\!-1\right)\right\rangle - (-1)^k\\
    &=\dev{}{\tau}\Bigg|_{\tau=1} \left \langle \sum_{c=0}^\ell (-1)^{\ell-c}\!\!\!\!\!\!\!\!\!\!\sum_{f_1...f_c\subseteq \{j_1 ... j_\ell\}} \!\!\!\!\!\!\! (\tlx_{2f_1} ... \tlx_{2f_c}) \left[(-1)^m + \sum_{d=1}^m (-1)^{m-d} \!\!\!\!\!\!\!\!\!\!\sum_{g_1 ... g_d \subseteq \{h_1 ... h_m\}} \!\!\!\!\!\!\! (\tlx_{2g_1} ... \tlx_{2g_d})\right]\right\rangle - (-1)^k\\
    &=(-1)^m+\dev{}{\tau}\Bigg|_{\tau=1} \left \langle \sum_{c=0}^\ell (-1)^{\ell-c}\!\!\!\!\!\!\!\!\!\!\sum_{f_1...f_c\subseteq \{j_1 ... j_\ell\}} \!\!\!\!\!\!\! (\tlx_{2f_1} ... \tlx_{2f_c}) \left[\sum_{d=1}^m (-1)^{m-d} \!\!\!\!\!\!\!\!\!\!\sum_{g_1 ... g_d \subseteq \{h_1 ... h_m\}} \!\!\!\!\!\!\! (\tlx_{2g_1} ... \tlx_{2g_d})\right]\right\rangle \\
    &=(-1)^m+ \left[ \sum_{c=0}^\ell (-1)^{\ell-c}\!\!\!\!\!\!\!\!\!\!\sum_{f_1...f_c\subseteq \{j_1 ... j_\ell\}} \!\!\!\!\! (1) \sum_{d=1}^m (-1)^{m-d} \!\!\!\!\!\!\!\!\!\!\sum_{g_1 ... g_d \subseteq \{h_1 ... h_m\}} \!\!\! \left(\delta_{c\ell}+\dev{}{\tau}\Bigg|_{\tau=1}\!\!\!\!\!\!\!\!\left \langle(\tlx_{2g_1} ... \tlx_{2g_d})\right\rangle\right)\right] \\
    &=(-1)^m+ \sum_{d=1}^m (-1)^{m-d} {m \choose d} +  \sum_{c=0}^\ell (-1)^{\ell-c}{ \ell \choose c} \left(\dev{}{\tau}\Bigg|_{\tau=1}\left \langle  \prod_{b=1}^m \left(\tlx_{2h_b}\!-1\right)\right\rangle \right) \\
    &= (-1)^m + \big[(1-1)^m -(-1)^m\big] +  (1-1)^\ell \left(\dev{}{\tau}\Bigg|_{\tau=1}\left \langle  \prod_{b=1}^m \left(\tlx_{2h_b}\!-1\right)\right\rangle \right)
\end{align}
which all cancels out to zero, unless $m=0$. 

\subsection{Relation of Temperley-Lieb algebra operators to stress-energy tensors}\label{app:tl_to_sets}

We will show that the entanglement betwen sites $i_1, ..., i_k$ has the same long-distance scaling as a product of $k$ factors of $t_1 - \langle t_1 \rangle$, where $t_1$ is a discretized element of the stress energy tensor component $T_{xx}-T_{yy}$. Therefore, $k$-party entanglement has the same long-distance scaling as a prodcut of $k$ factors of the stress-energy tensor. Firstly, we prove the following:
\begin{lemma}\label{lemma:change_identities_to_expectations}
    \begin{gather}
        \dev{}{\tau}\Bigg|_{\tau = 1} \big \langle (\tlx_{I_1} - \langle \tlx_{I_1}\rangle) ... (\tlx_{I_k} - \langle \tlx_{I_k}\rangle)\big\rangle = \dev{}{\tau}\Bigg|_{\tau = 1} \big \langle (\tlx_{I_1} - 1) ... (\tlx_{I_k} - 1)\big\rangle
    \end{gather}
\end{lemma}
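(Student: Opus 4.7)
The plan is to expand the left-hand side in terms of the right-hand side using the purely algebraic identity
$\tlx_{I_j} - \langle \tlx_{I_j}\rangle = (\tlx_{I_j} - 1) - \epsilon_j(\tau)$,
where $\epsilon_j(\tau) := \langle \tlx_{I_j}\rangle - 1$, and then to use the single analytic input $\epsilon_j(1) = 0$ to kill every leftover term under $d/d\tau|_{\tau = 1}$.

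First I would establish the normalization $\langle \tlx_{J_1} \cdots \tlx_{J_\ell}\rangle \big|_{\tau = 1} = 1$ for any collection of indices. This follows from the definition of the loop-model correlator as a ratio of weighted sums: at $\tau = 1$ every configuration contributes weight $\tau^{\#\text{loops}} = 1$ regardless of how many $\tlx$ symbols have been inserted, so numerator and denominator both reduce to the same count of planar diagrams. This normalization is already implicit in the statement (inherited from Ref.~\onlinecite{Sang2020} and used in subsection~\ref{app:entanglement_op_indicator}) that the $\tau$-derivative at $\tau = 1$ counts the expected number of loops created by an insertion. In particular $\epsilon_j(1) = 0$. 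Multiplying the shifted operators out gives
\begin{align}
\Big\langle \prod_{j=1}^k (\tlx_{I_j} - \langle \tlx_{I_j}\rangle)\Big\rangle = \sum_{S \subseteq \{1,\ldots,k\}} (-1)^{k-|S|} \Big(\prod_{j \notin S} \epsilon_j(\tau)\Big) \Big\langle \prod_{j \in S} (\tlx_{I_j} - 1)\Big\rangle,
\end{align}
and the full set $S = \{1,\ldots,k\}$ on the right already reproduces the right-hand side of the lemma. It remains to prove that every $S \subsetneq \{1,\ldots,k\}$ contributes $0$ after applying $d/d\tau|_{\tau = 1}$.

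For $|S| \leq k - 2$ the prefactor $\prod_{j \notin S} \epsilon_j(\tau)$ is a product of at least two factors each vanishing linearly at $\tau = 1$, so it vanishes to at least second order there and its derivative is zero, killing the whole term. The only delicate case is $|S| = k-1$, where the prefactor is a single $\epsilon_j$, linear in $\tau - 1$; the surviving contribution after differentiation is $-\epsilon_j'(1)\, \big\langle \prod_{i \neq j}(\tlx_{I_i} - 1)\big\rangle\big|_{\tau = 1}$. I would dispose of this by binomially expanding the inner $(k{-}1)$-fold product and invoking the normalization $\langle \prod \tlx\rangle\big|_{\tau=1} = 1$ once more, turning the bracket into $\sum_{T \subseteq \{1,\ldots,k\}\setminus\{j\}} (-1)^{(k-1)-|T|} = (1-1)^{k-1} = 0$ for $k \geq 2$. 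The main obstacle, and essentially the only nontrivial step, is pinning down the correlator normalization of the first paragraph; once that is in place, every remaining step is elementary binomial bookkeeping, and the entire content of the lemma is the statement that re-centering the $\tlx_{I_j}$ on their true means rather than on $1$ cannot affect the top-order coefficient of the $(\tau-1)$-expansion.
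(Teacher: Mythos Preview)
Your proposal is correct and follows essentially the same approach as the paper: both expand the product, invoke the normalization $\langle \tlx_{J_1}\cdots\tlx_{J_\ell}\rangle\big|_{\tau=1}=1$, and kill the leftover terms via the binomial identity $(1-1)^{k-1}=0$ for $k\geq 2$. Your substitution $\tlx_{I_j}-\langle\tlx_{I_j}\rangle=(\tlx_{I_j}-1)-\epsilon_j$ organizes the remainder by order of vanishing at $\tau=1$ a bit more cleanly than the paper's direct expansion-and-regrouping, but the content is identical.
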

\begin{proof}
    Expanding the product on the LHS,
    \begin{gather}
        \dev{}{\tau}\Bigg|_{\tau = 1} \big \langle (\tlx_{I_1} - \langle \tlx_{I_1}\rangle) ... (\tlx_{I_k} - \langle \tlx_{I_k}\rangle)\big\rangle = \dev{}{\tau}\Bigg|_{\tau = 1} \sum_{j=0}^k (-1)^{k-j}\!\!\!\!\!\!\! \sum_{A_1 ... A_j \in \{I_1 ... I_k\}} \!\!\!\!\!\!\langle \tlx_{A_1} ... \tlx_{A_j}\rangle \!\!\prod_{B \notin \{A_1 ...A_j\}} \!\!\!\!\langle \tlx_B\rangle
    \end{gather}
    Because $\langle \tlx_{J_1} ... \tlx_{J_\ell}\rangle_{\tau = 1} = 1$ for any set of $\tlx$ operators, applying the product rule gives, for $k \geq 2$:
    \begin{align}
        \text{LHS} &= \dev{}{\tau}\Bigg|_{\tau=1} \sum_{j=0}^k (-1)^{k-j}\!\!\!\!\!\!\! \sum_{A_1 ... A_j \in \{I_1 ... I_k\}} \!\!\!\left(\langle \tlx_{A_1} ... \tlx_{A_j}\rangle + \!\!\sum_{B \notin \{A_1 ... A_j\}} \!\!\!\!\langle \tlx_B\rangle\right)\n 
        &= \dev{}{\tau}\Bigg|_{\tau=1} \sum_{j=0}^k (-1)^{k-j}\!\!\!\!\!\!\! \sum_{A_1 ... A_j \in \{I_1 ... I_k\}} \!\!\!\langle \tlx_{A_1} ... \tlx_{A_j}\rangle + \sum_{j=0}^k (-1)^{k-j} {k \choose j} \frac{k-j}{k}\!\!\!\!\sum_{B \in \{I_1 ... I_k\}} \!\!\!\!\langle \tlx_B \rangle \n
        &= \dev{}{\tau}\Bigg|_{\tau = 1} \big \langle (\tlx_{I_1} - 1) ... (\tlx_{I_k} - 1)\big\rangle + \sum_{j=0}^{k-1} (-1)^{k-j} {k-1 \choose j} \!\!\sum_{B \in \{I_1 ... I_k\}} \!\!\!\!\langle \tlx_B\rangle \n
        &= \dev{}{\tau}\Bigg|_{\tau = 1} \big \langle (\tlx_{I_1} - 1) ... (\tlx_{I_k} - 1)\big\rangle + (-1)^{k} (1-1)^{k-1} \!\!\sum_{B \in \{I_1 ... I_k\}} \!\!\!\!\langle \tlx_B\rangle \n
        &= \dev{}{\tau}\Bigg|_{\tau = 1} \big \langle (\tlx_{I_1} - 1) ... (\tlx_{I_k} - 1)\big\rangle
    \end{align}
\end{proof}
The $k$-party entanglement $E(i_1, ..., i_k)$ is equal to the LHS of Lemma~\ref{lemma:change_identities_to_expectations}. Next, we separate even operators $\tlx_{2j}$ from odd operators $\tlx_{2j+1}$:
\begin{lemma} \label{lemma:continuum_limit}
    Assuming we are in the limit of large distances (i.e. $|i_a-i_b| \gg 1$ for all $1 \leq a,b \leq k$),
    \begin{align}
        \dev{}{\tau}\Bigg|_{\tau = 1} &\left\langle \prod_{j=1}^{k}\big(\!\tlx_{2i_j} + \tlx_{2i_j+1} - \langle \tlx_{2i_j}\rangle - \langle \tlx_{2i_j+1}\rangle\big)\right\rangle =\n
        \dev{}{\tau}\Bigg|_{\tau = 1}& \left\langle \prod_{j=1}^k \big(\!\tlx_{2i_j} - \langle \tlx_{2i_j}\rangle \big)\right\rangle + 
        \left\langle \prod_{j=1}^k \big(\!\tlx_{2i_j+1} - \langle \tlx_{2i_j+1}\rangle \big)\right\rangle \label{eq:odd_even_separation}
    \end{align}
\end{lemma}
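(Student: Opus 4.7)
The plan is to expand the LHS by multilinearity into $2^k$ monomials, one for each binary string $\vec\sigma \in \{0,1\}^k$ that selects $\tlx_{2i_j}$ (if $\sigma_j=0$) or $\tlx_{2i_j+1}$ (if $\sigma_j=1$) at each site $i_j$. The two uniform strings $\vec\sigma=(0,\ldots,0)$ and $\vec\sigma=(1,\ldots,1)$ reproduce exactly the two summands on the RHS, so the content of the lemma is that the remaining $2^k-2$ mixed monomials drop out of the long-distance limit.

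Concretely, I would write
\begin{gather*}
\text{LHS}=\sum_{\vec\sigma\in\{0,1\}^k}\dev{}{\tau}\bigg|_{\tau=1}\left\langle\prod_{j=1}^k\!\big(\tlx_{2i_j+\sigma_j}-\langle\tlx_{2i_j+\sigma_j}\rangle\big)\right\rangle,
\end{gather*}
and then pass to a symmetric/antisymmetric basis at each site by defining
\begin{gather*}
S_j=(\tlx_{2i_j}-\langle\tlx_{2i_j}\rangle)+(\tlx_{2i_j+1}-\langle\tlx_{2i_j+1}\rangle),\qquad D_j=(\tlx_{2i_j}-\langle\tlx_{2i_j}\rangle)-(\tlx_{2i_j+1}-\langle\tlx_{2i_j+1}\rangle),
\end{gather*}
so that $(\tlx_{2i_j+\sigma_j}-\langle\tlx_{2i_j+\sigma_j}\rangle)=\half(S_j+(-1)^{\sigma_j}D_j)$. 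Multiplying out re-expresses every summand (whether uniform or mixed) as a linear combination of correlators of the form $\langle\prod_{j\in T}D_j\prod_{j\notin T}S_j\rangle$ indexed by $T\subseteq[k]$, with signs depending on $\vec\sigma$.

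The key input is the Koo--Saleur identification recalled above: in the continuum limit, $S_j$ converges (up to normalization) to the stress-tensor field $t_1(x_{i_j})-\langle t_1\rangle$, while $D_j$, an antisymmetric ``staggered'' combination of adjacent Temperley--Lieb generators, realizes a continuum operator of strictly higher scaling dimension. Any correlator containing at least one $D_j$ insertion therefore decays faster than the pure correlator $\langle\prod_j S_j\rangle$ as the separations $|i_a-i_b|$ are scaled up, so only the $T=\emptyset$ piece contributes to leading order. Since this $T=\emptyset$ piece is present in both the LHS expansion and in the two uniform $\vec\sigma$ summands that make up the RHS with the same scaling behavior, LHS and RHS agree to leading order in the long-distance limit, which is the sense in which the claimed equality is to be read.

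The principal difficulty is controlling the scaling dimension of $D_j$: one needs the Koo--Saleur analysis of the dense $Q=1$ Potts representation to confirm that the antisymmetric combination of adjacent generators corresponds to a strictly higher-dimension descendant (or an irrelevant operator), rather than to another stress-tensor component at the same dimension. Once this hierarchy is in hand, the suppression of all $D_j$-containing monomials follows from standard CFT scaling, and the lemma reduces to reading off the common leading term $\langle\prod_j S_j\rangle$ on both sides. The next step in the paper, separating even from odd Temperley--Lieb operators and interpreting each $\prod_j S_j$ as a $k$-point stress-tensor correlator, then yields the exponent $\alpha_k=2k$.
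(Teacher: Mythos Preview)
Your argument has a fatal gap: the hypothesis that $D_j$ carries strictly higher scaling dimension than $S_j$ is both unproven and in fact inconsistent with the identity you are trying to prove. Trace the $T=\emptyset$ coefficient through your change of basis. The LHS is literally $\dev{}{\tau}\big|_{\tau=1}\langle\prod_j S_j\rangle$, so in the $S/D$ basis it is purely the $T=\emptyset$ term with coefficient $1$. Each monomial indexed by $\vec\sigma$ has $T=\emptyset$ coefficient $2^{-k}$, so the two uniform monomials forming the RHS together carry $T=\emptyset$ coefficient $2^{1-k}$. If every $D$-containing correlator were subleading you would conclude $\text{LHS}\sim 2^{k-1}\cdot\text{RHS}$, contradicting the lemma for all $k\ge 2$. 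Equivalently: under your hypothesis each \emph{mixed} monomial also has leading piece $2^{-k}\langle\prod_j S_j\rangle\neq 0$, so the mixed monomials do not ``drop out'' at all. The lemma in fact forces the even-$|T|$, $T\neq\emptyset$ correlators to scale exactly like the pure-$S$ correlator, so $D_j$ cannot be a higher-dimension operator here.

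The paper's proof is entirely different and is exact rather than asymptotic; the large-distance hypothesis enters only to ensure the operators at distinct sites commute. The engine is a combinatorial property of the Temperley--Lieb loop model: when a collection of $\tlx_j$ generators is appended at the top boundary, any loop that closes is closed using generators of a \emph{single} parity. Hence the loop count is additive across parities, $L(\tlx_{\text{even}}\cup\tlx_{\text{odd}})=L(\tlx_{\text{even}})+L(\tlx_{\text{odd}})$, which (since correlators are $\langle\tau^L\rangle$) gives exact factorization $\dev{}{\tau}\big|_{\tau=1}\langle\tlx_{\text{even}}\,\tlx_{\text{odd}}\rangle=\dev{}{\tau}\big|_{\tau=1}\langle\tlx_{\text{even}}\rangle\langle\tlx_{\text{odd}}\rangle$. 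Because each centered product vanishes at $\tau=1$, the product rule kills every mixed-parity monomial exactly, leaving only the all-even and all-odd terms on the RHS.
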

\begin{proof}
    We take the limit of large distances to ensure all $\big(\!\tlx_{2i_j} + \tlx_{2i_j+1} - \langle \tlx_{2i_j}\rangle - \langle \tlx_{2i_j+1}\rangle\big)$ terms commute, and therefore can be applied to the top layer simultaneously. Therefore, for the rest of this proof we assume that any two operators of the form $\tlx_{2i_a (+1)}$, $\tlx_{2i_b(+1)}$, with $a \neq b$, commute.
    
    We start with an observation on the nature of the Temperley-Lieb algebra: if a loop is closed by applying some set of $\tlx_j$ operators at the top layer, then all the $j$'s must be of the same parity. In other words, if we apply a group of $\tlx_j$ operators to the top layer, then the loops that get created can be separated into those that were closed with only even-parity operators and those that were closed with only odd-parity operators:
    \begin{gather}
        L(\tlx_{2j_1} ... \tlx_{2j_k} \tlx_{2\ell_1+1} ... \tlx_{2\ell_m+1}, w) = L(\tlx_{2j_1} ... \tlx_{2j_k},w) + L( \tlx_{2\ell_1+1} ... \tlx_{2\ell_m+1}, w)
    \end{gather}
    Therefore,
    \begin{align}
        \dev{}{\tau} \Bigg|_{\tau=1} \langle \tlx_{2j_1} ... \tlx_{2j_k} \tlx_{2\ell_1+1} ... \tlx_{2\ell_m+1}\rangle &= \dev{}{\tau} \Bigg|_{\tau=1} \langle \tlx_{2j_1} ... \tlx_{2j_k}\rangle + \langle \tlx_{2\ell_1+1} ... \tlx_{2\ell_m+1}\rangle\n 
        &= \dev{}{\tau} \Bigg|_{\tau=1} \langle \tlx_{2j_1} ... \tlx_{2j_k}\rangle \langle \tlx_{2\ell_1+1} ... \tlx_{2\ell_m+1}\rangle
    \end{align}
    Likewise, 
    \begin{align}
        \dev{}{\tau} \Bigg|_{\tau=1} &\bigg\langle (\tlx_{2j_1}-\langle\tlx_{2j_1}\rangle) ... (\tlx_{2j_k}-\langle\tlx_{2j_k}\rangle) (\tlx_{2\ell_1+1}-\langle \tlx_{2\ell_1+1}\rangle) ... (\tlx_{2\ell_m+1}-\langle \tlx_{2\ell_m+1}\rangle)\bigg\rangle \n 
        &= \dev{}{\tau} \Bigg|_{\tau=1} \bigg\langle (\tlx_{2j_1}-\langle\tlx_{2j_1}\rangle) ... (\tlx_{2j_k}-\langle\tlx_{2j_k}\rangle) \bigg\rangle \bigg\langle (\tlx_{2\ell_1+1}-\langle \tlx_{2\ell_1+1}\rangle) ... (\tlx_{2\ell_m+1}-\langle \tlx_{2\ell_m+1}\rangle)\bigg\rangle
    \end{align}
    In fact, this term should be zero as long as both $k > 0$ and $m > 0$, as 
    \begin{gather}
        \langle \tlx_{a_1} ... \tlx_{a_c}\rangle \Bigg|_{\tau = 1} = \bigg\langle \tau^{L(a_1 ... a_c, w)}\bigg\rangle\Bigg|_{\tau=1} = 1\n
        \bigg\langle (\tlx_{a_1}-\langle\tlx_{a_1}\rangle) ... (\tlx_{a_c}-\langle\tlx_{a_c}\rangle) \bigg\rangle\Bigg|_{\tau=1} = (1-1)^c = 0
    \end{gather}
    for all $c > 0$. Returning to the LHS of (\ref{eq:odd_even_separation}), 
    \begin{align}
        \dev{}{\tau}\Bigg|_{\tau = 1} \left\langle \prod_{j=1}^{k}\big(\!\tlx_{2i_j} + \tlx_{2i_j+1} - \langle \tlx_{2i_j}\rangle - \langle \tlx_{2i_j+1}\rangle\big)\right\rangle = \dev{}{\tau}\Bigg|_{\tau = 1} \sum_{\vec{b} \in \{0,1\}^k}\left\langle \prod_{j=1}^{k}\big(\!\tlx_{2i_j+b_j} - \langle \tlx_{2i_j+b_j}\rangle \big)\right\rangle,
    \end{align}
    we therefore have that every term in the sum over $\vec{b}$ must be zero, unless either all the $b_j$'s are zero, or all the $b_j$'s are one, yielding the RHS of (\ref{eq:odd_even_separation}).
\end{proof}
In the large distance limit, the quantities $\dev{}{\tau} \big|_{\tau=1} \left\langle \prod_{j=1}^k \big(\!\tlx_{2i_j} - \langle \tlx_{2i_j}\rangle \big)\right\rangle $ and $\dev{}{\tau} \big|_{\tau=1}\left\langle \prod_{j=1}^k \big(\!\tlx_{2i_j+1} - \langle \tlx_{2i_j+1}\rangle \big)\right\rangle$ have the same scaling laws, and both quantities are nonnegative, so either term must have the same scaling law as the sum. Since the former term is $E(i_1, ..., i_k)$ and the sum is the expectation value of $k$ copies of $t_1 - \langle t_1\rangle$, that means $E(i_1, ..., i_k)$ has the same scaling law as the expectation value of $k$ factors of the stress-energy tensor. Therefore, for sites separated by distances proportional to $x$, $E(i_1, ..., i_k)$ should go as $x^{-2k}$.

\section{Extra details on simulation algorithms}\label{app:algorithms}

\subsection{Simulating the 2+1D percolation model} \label{app:2plus1d}

As the loop model does not exist in 2+1D percolation, we cannot use it to simplify our algorithm, and must rely on a more general percolation algorithm. 

At each site, we store the index of the cluster in the percolation model it is currently a part of. We note that there is at most $N$ clusters on a particular layer, so with some index recycling we can bound our cluster indices by an $O(N)$ quantity, which we will call $C(N)$ - for our algorithm we set $C(N) = 2N$. When a measurement merges clusters together, the sites in both clusters should adopt the lower cluster index. Each layer consists of a layer of nearest-neighbor $ZZ$ measurements, where clusters may be merged together but not created, and a layer of single-site $X$ measurements, where new clusters may be created. 

Throughout the $ZZ$ measurements we store two auxiliary objects. The first is a $C(N)$-sized ``in-use" array which specifies which cluster indices were either not being used when the $ZZ$ layer was applied, or which will be merged out by the $ZZ$ measurements. This object is assembled in the $ZZ$ layer and then used to determine recyclable cluster indices in the $X$ measurement layer, with both operations contributing an $O(N)$ cost towards the layer. 

The second is a set of cluster-merging trees, where each tree contains the indices of all clusters that must be merged into the index at the root of the tree, due to $ZZ$ measurements connecting each cluster in the tree with its parent. This is again represented by a $C(N)$-sized array whose entries are the cluster index of the immediate parent of the node - or, if the cluster should not be merged out, the array's entry is the cluster index itself. When a $ZZ$ measurement is applied to two sites of different clusters, the node of the larger cluster index gets assigned to the node of the smaller cluster index. If the node of the larger cluster index already has a different parent, however, we must merge the trees of both parents together, by tracing both trees to their root and comparing the cluster indices of the roots. After all $ZZ$ measurements have been taken, we officially merge all clusters by assigning each cluster index to the one at the root of its respective tree.

There are two opportunities for this process to acquire a superlinear cost in $N$ - firstly, when two trees need to be merged (a process that might be called $\Theta(N)$ times over a single layer), the action of tracing both trees to the root takes a cost equal to the sum of the tree depths. Therefore, an average tree depth of $\omega(1)$ implies a superlinear contribution to the computational cost per layer. Next, the process of assigning each cluster to its root may be superlinear in $N$, although it only has to be called once per layer. While it is difficult to determine the true average cost of either process, we can still measure the costs numerically. In Fig.~\ref{fig:costs_of_3d}, we graph the approximate per-site operational cost of both parts of the algorithm over $N$, as well as the per-layer, per-site time cost of the whole algorithm. This evidence implies that for large $N$, the per-site operational cost of the processes approaches a constant, or a quantity that is almost constant. 

\begin{figure}[h]
    \centering
    \begin{tikzpicture}
        \begin{scope}
            \node[anchor=north west,inner sep=0] (image_a) at (0,0)
            {\includegraphics[width=0.5\columnwidth]{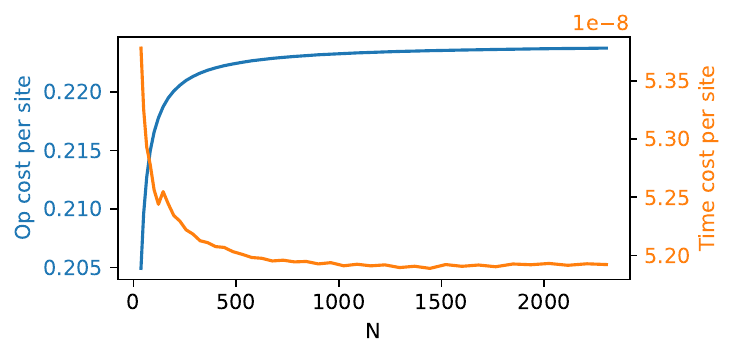}};
        \end{scope}
    \end{tikzpicture}
    \vspace*{-0.4cm}
    \caption{Time and operational cost analysis of the percolation cluster algorithm used for the 2+1D model. We plot the average time cost per site of updating a single layer \textit{(orange)} and the approximate operational cost of all potentially superlinear processes \textit{(blue)} over system size $N$, for $L\times L$ systems, with $6 \leq L \leq 48$.}
    \label{fig:costs_of_3d}
\end{figure}

\begin{figure}[h]
    \centering
    \begin{tikzpicture}
        \begin{scope}
            \node[anchor=north west,inner sep=0] (image_a) at (0,0)
            {\includegraphics[width=0.8\columnwidth]{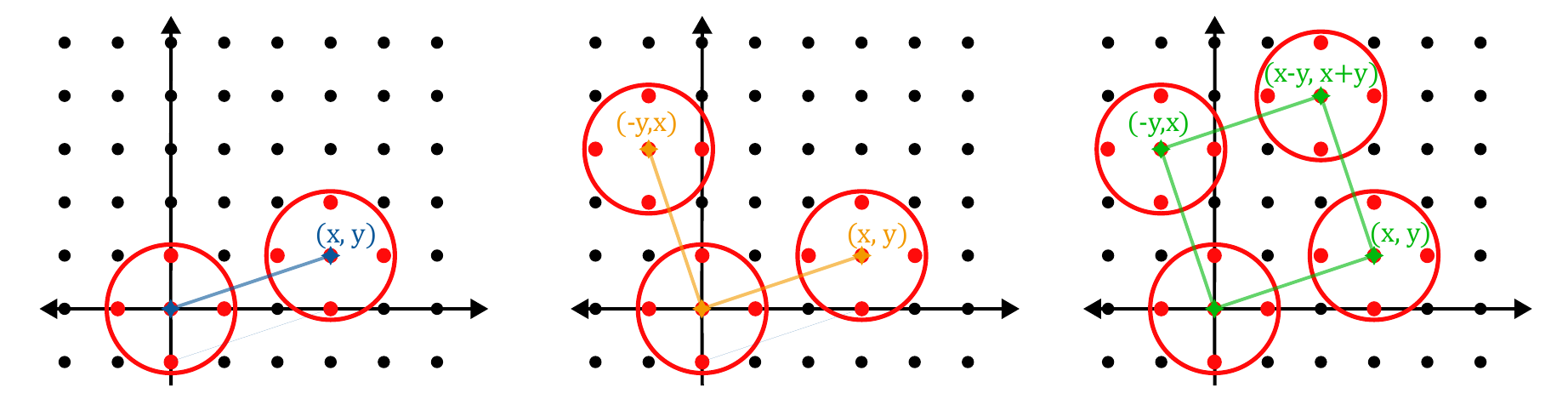}};
        \end{scope}
    \end{tikzpicture}
    \vspace*{-0.4cm}
    \caption{Subregions to detect entanglement on in the 2+1D measurement-only stabilizer circuit. We check if the circular subregion centered at the origin of a given radius has 2-party entanglement \textit{(blue)} with the subregion centered at $(x,y)$, 3-party entanglement \textit{(orange)} with the subregions at $(x,y)$ and $(-y,x)$, or 4-party entanglement \textit{(green)} with the subregions at $(x,y)$, $(-y,x)$ and $(x-y,x+y)$.}\label{fig:moc_3d}
\end{figure}

\subsection{Measuring $k$-party entanglement on the 2+1D MOC} \label{app:2plus1d_kparty_measuring}

Like in the 1+1D case, the necessary and sufficient condition for $k$-party entanglement to exist between $k$ disjoint subregions is that a cluster surface exists in the equivalent percolation model, i.e. a cluster intersects the surface inside each of the subregions, and does not intersect the surface anywhere else.

While in the 1+1D case we had to arrange each interval on a line, in 2+1D space we can position our subregions anywhere on the 2D surface of the system. As long as all distances in the system are governed by a single scale $x$, the entanglement should have the same scaling law over $x$. In our case, we choose to measure up to $4$-party entanglement by arranging our subregions on corners of a square.

Specifically, for each distance vector $(x,y)$, we measure the 2-party entanglement between the subregions centered at $(0,0)$ and $(x,y)$, the 3-party entanglement between $(0,0)$, $(x,y)$ and $(-y,x)$, as well as the 4-party entanglement between $(0,0)$, $(x,y)$, $(-y, x)$ and $(x-y, x+y)$, effectively filling out a square of side length $\sqrt{x^2+y^2}$ with each additional subregion (Fig.~\ref{fig:moc_3d}). To make sure the subregions do not intersect, we only look at the case $x^2+y^2 > (2r)^2$, roughly corresponding to $\eta < 1$.

\begin{figure}[h]
    \centering
    \begin{tikzpicture}
        \begin{scope}
            \node[anchor=north west,inner sep=0] (image_a) at (0,0)
            {\includegraphics[width=0.49\columnwidth]{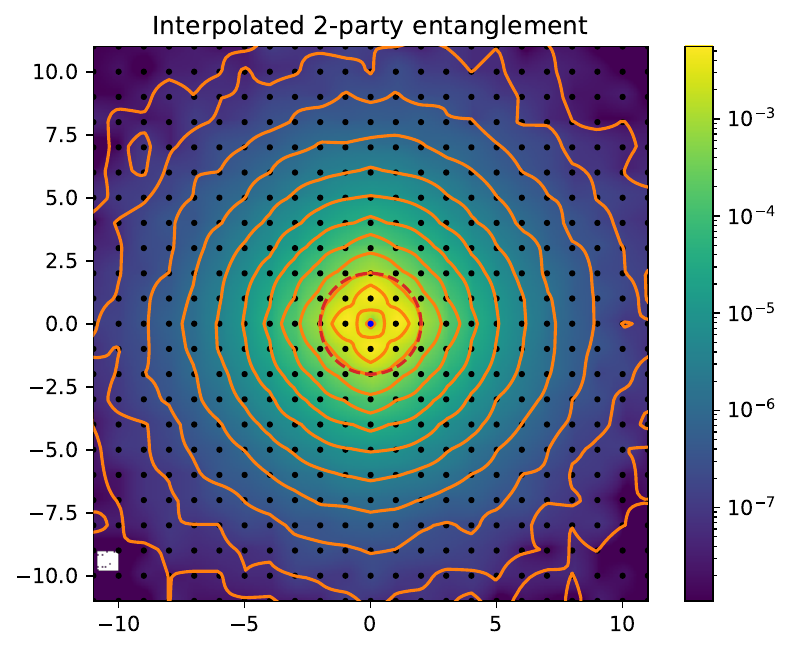}};
            \node [anchor=north west] (note) at (0,0) {\small{\textbf{a)}}};
        \end{scope}
        \begin{scope}[xshift=0.5\columnwidth]
            \node[anchor=north west,inner sep=0] (image_a) at (0,0)
            {\includegraphics[width=0.5\columnwidth]{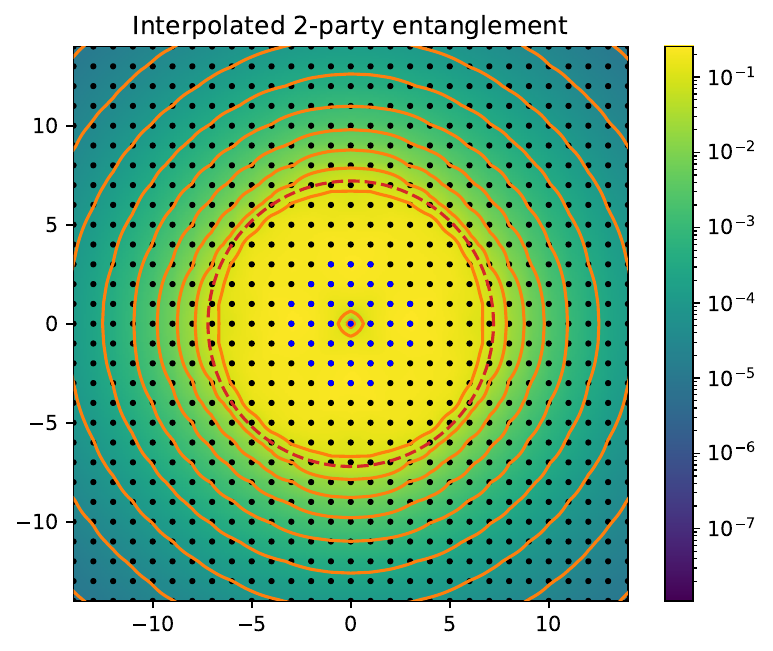}};
            \node [anchor=north west] (note) at (0,0) {\small{\textbf{a)}}};
        \end{scope}
    \end{tikzpicture}
    \begin{tikzpicture}
        \begin{scope}
            \node[anchor=north west,inner sep=0] (image_a) at (0,0)
            {\includegraphics[width=0.5\columnwidth]{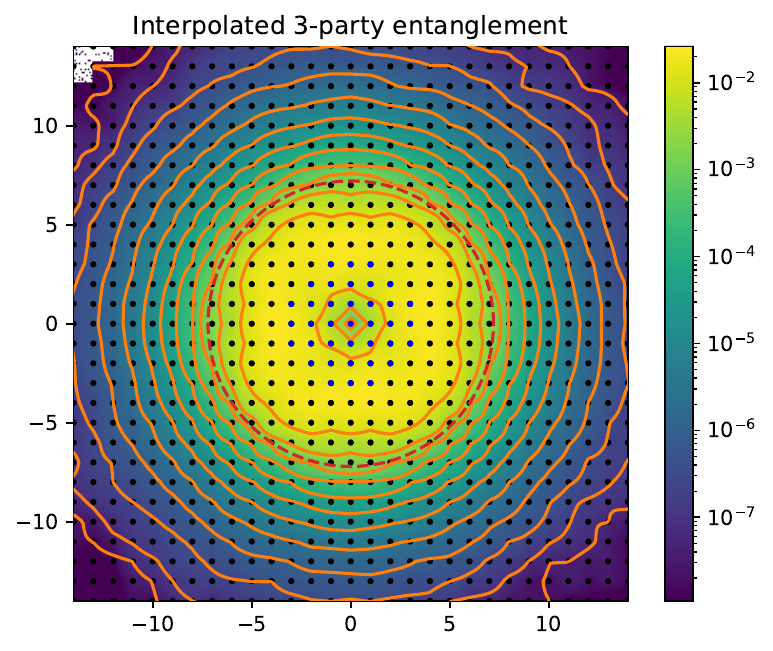}};
            \node [anchor=north west] (note) at (0,0) {\small{\textbf{c)}}};
        \end{scope}
        \begin{scope}[xshift=0.5\columnwidth]
            \node[anchor=north west,inner sep=0] (image_a) at (0,0)
            {\includegraphics[width=0.5\columnwidth]{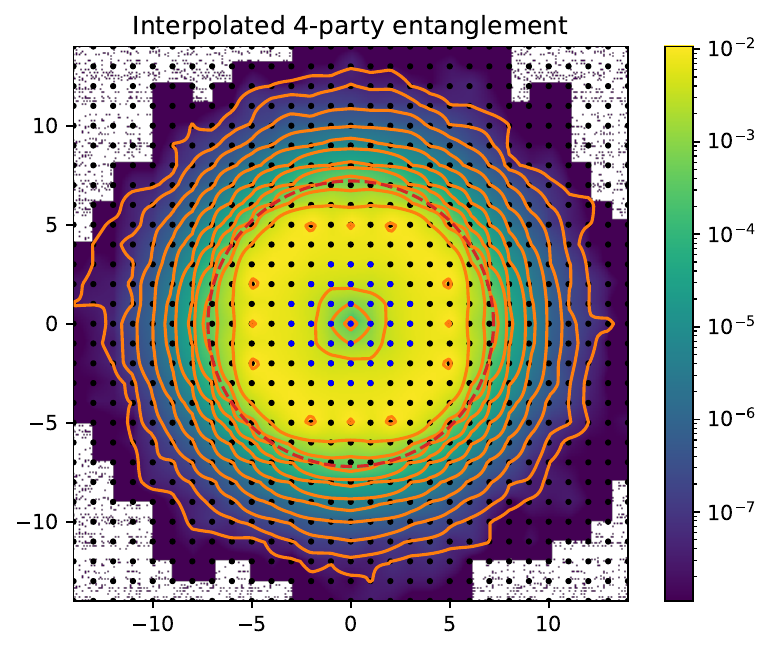}};
            \node [anchor=north west] (note) at (0,0) {\small{\textbf{d)}}};
        \end{scope}
    \end{tikzpicture}
    \vspace*{-0.4cm}
    \caption{\textit{a)} 2-party entanglement on the 2+1D measurment-only circuit with single-site subregions. Subregions are given in \textit{blue}, with the \textit{red dotted circle} indicating the distance beyond which neighboring subregions are not reduced in size due to overlaps with other subregions. Contours are given for every half an order of magnitude, while the entanglement has been extended beyond the lattice points \textit{(black)} using bilinear interpolation. The contours at intermediate distances are circular, implying that they are governed by a Euclidean distance metric. \textit{b-d)} Entanglement on the 2+1D measurment-only circuit with circular subregions of radius $\sqrt{13}$, for 2-4 parties. }\label{fig:top_down_entanglement}
\end{figure}

\subsection{Angle-averaged entanglement} \label{app:angle_averaged_entanglement}
As can be seen from Fig.~\ref{fig:top_down_entanglement}, the $k$-party entanglement $N_k(\eta)$ exhibits a rough circular symmetry, which gradually forgets the original shape of the subregion as the distance is increased. This symmetry allows us to define a specific "average" entanglement at each $\eta$ value using our data.

Given a specific value of $\eta$ and angle $\theta \in [0,2\pi)$, we define $x(\eta, \theta)$ and $y(\eta, \theta)$ such that $\eta = \frac{c(2r)^2}{c(x)^2+c(y)^2}$ and $\tan \theta = \frac{c(y)}{c(x)}$. This specifies a coordinate in the original, non-periodic geometry, where we have determined $N_k(i,j)$ for every lattice point $(i,j)$. We can therefore extend this data to an estimate of $N_k(\eta, \theta)$ using bilinear interpolation. Finally, we average over $\theta$ to obtain $N_k(\eta)$.

The error in our estimation comes roughly from two sources. Firstly, there deviations in $N_k(\eta, \theta)$ over $\theta$, both from statistical fluctuations and systematic artifacts of the original shape of the subregions or lattice. Secondly, there is the shot noise error in our measured values $N_k(i,j)$, which will be reflected in the interpolated value of $N_k(\eta, \theta)$. While $N_k(\eta, \theta)$ is a weighted average over measurements at each lattice point:
\begin{gather}
    N_k(\eta, \theta) = \sum_{ij} w_{ij}(\eta, \theta) N_k(i,j) \qquad \qquad \sum_{ij} w_{ij}(\eta, \theta) = 1
\end{gather}
we make the conservative assumption that all lattice point measurements $N_k(i,j)$ have positively correlated fluctuations, contributing an error of 
\begin{gather}
    \sum_{ij} w_{ij}(\eta, \theta) \Delta N_k(i,j) = \sum_{ij} w_{ij}(\eta, \theta) \sqrt{\frac{N_k(i,j)}{\Omega}}
\end{gather}
where $\Omega$ is the number of iterations of the circuit.

As shot noise fluctuations form part of the deviations in $N_k(\eta, \theta)$ over $\theta$, these two errors share the same source to some extent, so assuming that they are instead independent should overestimate the overall error. We make this assumption, and add the errors from the two sources in quadrature. 

\subsection{Finite size scaling}\label{app:finite_size_scaling}
\begin{figure}[h]
    \centering
    \begin{tikzpicture}
        \begin{scope}
            \node[anchor=north west,inner sep=0] (image_a) at (0.0,0)
            {\includegraphics[width=0.84\columnwidth]{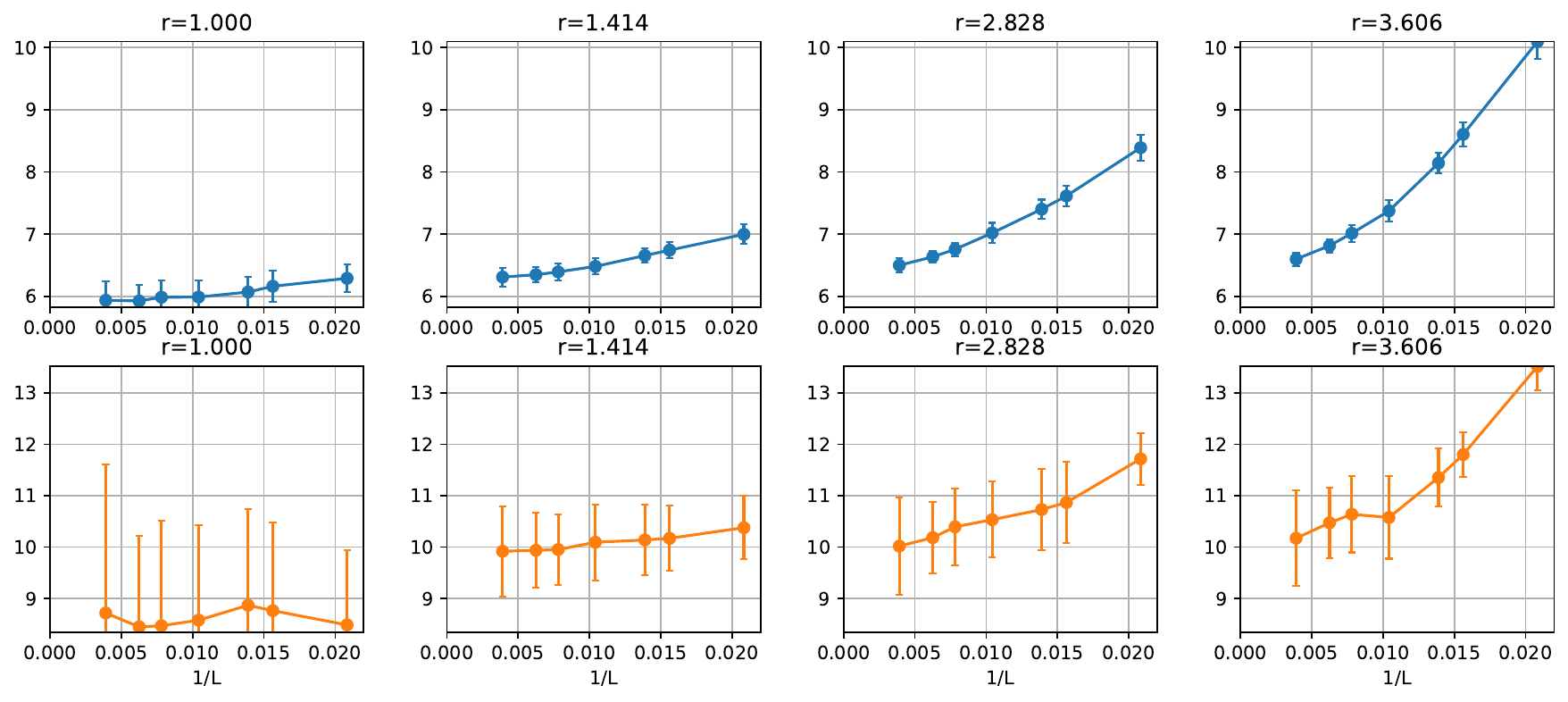}};
        \end{scope}
    \end{tikzpicture}
    \vspace*{-0.4cm}
    \caption{Fitted values of $\alpha_2$ and $\alpha_3$ for subregions of radius $1, \sqrt{2}$, $\sqrt{8}$ and $\sqrt{13}$ (corresponding to 1, 5, 21 and 37 lattice points respectively). }\label{fig:finite_size_scaling}
\end{figure}
The value of $\alpha_k$ varies over subregion radius (Fig.~\ref{fig:gme_3d}d). This is most likely due to finite-size effects, as the variation becomes less extreme when $L$ is increased (Fig.~\ref{fig:finite_size_scaling}). A small system size prevents large-radius subregions from reaching the limit of large relative separation: as can be seen from Fig.~\ref{fig:gme_3d}b,c, the apparent entanglement exponent tends to be steeper at shorter distance scales, biasing large-radius estimates upwards. Moreover, the effects of periodic boundary conditions become more apparent, which the chord length-based metric might not completely compensate for. Extrapolating the curves of Fig.~\ref{fig:finite_size_scaling} to the $1/L \rightarrow 0$ limit is difficult due to the ambiguous convexity of each curve, but it appears that the 2-party limit is somewhere between 6 and 6.5, while the 3-party limit is between 8.5 and 10.

\subsection{Indirect $k$-party entanglement}\label{app:indirect_entanglement}

If subregions $A_1 ... A_k$ are a single qubit each, they exhibit $k$-party entanglement in the MOC if and only if they belong to a $k$-qubit cat state. This condition gets relaxed if the subregions become larger than a single qubit, due to the freedom of local operations we can apply to each subregion. For example, suppose subregion $A_2$ consists of 2 qubits, one of which forms a Bell state with a site in $A_1$, while the other forms a Bell state with a site in $A_3$. Because a $ZZ$ measurement on those qubits in $A_2$ would yield a 3-party cat state over all three subregions, which can be converted into a 3-qubit GHZ once the $ZZ$ measurement outcome has been classically communicated to the other parties, we are forced to conclude that those subregions have 3-party GME, despite the lack of a 3-party cat state in the initial output. 

This indirect $k$-party entanglement complicates any measurement of $k$-party GME, but does not affect the entanglement exponents $\alpha_k$. This is because all cases of indirect entanglement must have harsher scaling than direct entanglement in the percolation model. In fact, any system with subadditive $\alpha_k$ should be dominated by genuine network mulitpartite entanglement in the long distance limit, with indirect entanglement being a strictly subleading term. The above example, where simultaneous $2+2$-party entanglement yielded indirect 3-party entanglement, would have a scaling exponent at least as bad as $2\alpha_2$ (and may be significantly \textit{worse} than that, due to the requirement of those events happening simultaneously without merging into direct entanglement - this is formalized in the BKR inequality~\cite{VanDenBerg1985,Reimer2000}).  The discrepancy between the scaling of direct and indirect entanglement for the 1+1D measurement-only circuit can be seen in Fig.~\ref{fig:indirect_entanglement}.

\begin{figure}[h]
    \centering
    \begin{tikzpicture}
        \begin{scope}
            \node[anchor=north west,inner sep=0] (image_a) at (0,0)
            {\includegraphics[width=0.6\columnwidth]{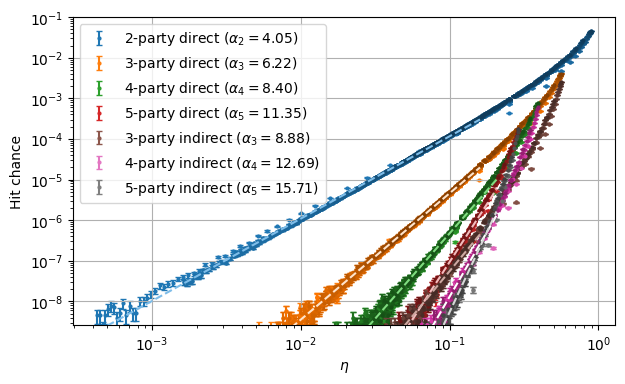}};
        \end{scope}
    \end{tikzpicture}
    \vspace*{-0.4cm}
    \caption{ Indirect entanglement on the 512 qubit 1+1D measurement-only circuit, compared to direct entanglement, for subregion widths 1-16. }\label{fig:indirect_entanglement}
\end{figure}

\section{Measurement-only stabilizer circuits with longer range entanglement}
\label{app:other_circuits}

In this section we will provide alternate examples of measurement-only circuits and ask whether the 1+1D alternating $ZZ/X$ architecture offers the longest-range entanglement possible. That is, we ask if it is possible to construct a random circuit ensemble on a system with some definition of locality and natural distance scale $x$, consisting of local two-site $ZZ$ and single-site $X$ measurements, such that $\alpha_k < 2k$. In addition, we require that the $k$-party entanglement between different subregions does not go to zero with system size - this rules out ensembles like the global Haar distribution where $\alpha_k$ is effectively $0$, but any local collection of sites are almost completely unentangled with each other.

There is an absolute lower bound on the decay rate of $k$-party entanglement in these circuits, derived from entanglement monogamy (which, with our class of stabilizer circuits, simply means that each site can only belong to one possible cat state). 
Suppose that the entanglement of $k$ sites $i_1 < ... <i_k$ has a lower bound
\begin{gather}
    p_0^{(k)} \prod_{j=1}^{k-1} |i_{j+1}-i_j|^{-\gamma},
\end{gather}
that is, $\alpha_k = \gamma(k-1)$. Then the probability that site $i_1$ forms the leftmost endpoint of \textit{any} $k$-site cat state is clearly lower bounded by
\begin{gather}
    p_0^{(k)}\zeta(\gamma)^{k-1}
\end{gather}
thus, in order for this probability to be bounded by 1, we must have $\alpha > 1$, with $p_0^{(k)}$ bounded by $\frac{1}{k\zeta(\alpha)^{k-1}}$ (Alternatively, specifying $\epsilon = \gamma-1 > 0$, we have $\alpha_k = (k-1)(1+\epsilon)$.

This bound is far more generous than the $\alpha_k = 2k$ scaling we found before, and in fact, there is very little stopping us from reaching this bound in a handful of layers on the right system. As long as the system geometry is wide enough to allow percolation clusters to cross each other (e.g. a ladder geometry of two parallel 1D systems), any product of cat states is the upper face of some percolation model. Therefore, for any distribution of products of cat states, we can define a random circuit ensemble that generates it, by directly relating each product state to a circuit that generates it. 

Of course, there is no guarantee that such an ensemble will exhibit any desirable qualities, such as a locally defined gate distribution or a connection to a CFT. In the rest of this section, we will present some ensembles that, while less artificial in construction than the example above, exceed the $\alpha_k = 2k$ scaling in different ways.

\subsection{Hyperbolic circuit}

This circuit acts on an $N = 2^n$-site system with periodic boundary conditions, using the same 1D geometry of the original circuit. Firstly, we``cut" the periodic BC at an arbitrary site chosen from the uniform distribution, setting that as our leftmost endpoint. This ensures that, while the circuit we create is not translation symmetric, the resulting ensemble is. 

The hyperbolic circuit consists of $n-1$ layers such that the $m$th layer from the top are connected by gates of size $2^m$. While these gates can be arbitrarily large, they can all be constructing by taking specific local $X$ and $ZZ$ measurements to finite depth.

At the core of each gate is a $ZZ$ measurement over all of its $2^m$ active sites. From there, we assign:
\begin{itemize}
    \item \textit{Left transmission}: A $\frac{p}{2}$ probability, with $p < 1$, of applying $X$ measurements to the left half of the output,
    \item \textit{Right transmission}: A $\frac{p}{2}$ probability of applying $X$ measurements to the right half of the output,
    \item \textit{Branching}: A $q < 1-p$ probability of applying no $X$ measurements at all,
    \item \textit{Reflection}: A $1-p-q$ probability of applying $X$ measurements to the entire input.
\end{itemize}

This system exhibits power law entanglement decay because, while the weight of a specific string is exponential over its length, the length of the string is effectively logarithmic in the distance it covers on the boundary, due to the hyperbolic geometry we are borrowing. Specifically,

\begin{claim}
    The probability that sites $A_1 < A_2 < ... < A_k$ form a generalized cat state is bounded below by
    \begin{gather}
        q^{k-2} (1-p-q) \frac{(p/2)^k}{2-(p/2)^k} x^{-k \log_2(\frac{2}{p})}
    \end{gather}
    where $x = |A_k - A_1|$.
\end{claim}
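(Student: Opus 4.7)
The plan is to exhibit a family of disjoint gate-action configurations that each force the final state to contain a cat state supported on exactly $\{A_1,\ldots,A_k\}$, and then sum their probabilities. The central structural object is the minimal subtree $T$ of the binary gate hierarchy that spans the leaf gates containing $A_1,\ldots,A_k$; $T$ is rooted at the least common ancestor $v$ at some layer $\ell\sim\log_2 x$, has $k$ leaves, $k-1$ branching internal nodes (where both children lie in $T$) and a collection of pass-through internal nodes (where only one child does). The family will be indexed by an integer $j\ge 1$ specifying the layer $\ell+j$ at which a reflection ``seeds'' the cat state from above the LCA.

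For a fixed $j$ the scenario I would impose is: (i) the unique ancestor of $v$ at layer $\ell+j$ reflects, contributing the factor $(1-p-q)$ and erasing any cat state extending into $v$'s block from still higher layers; (ii) each of the $j-1$ intermediate ancestors of $v$, at layers $\ell+j-1,\ldots,\ell+1$, applies its $ZZ$ (creating a fresh cat state on its block) and transmits toward the $v$-containing half, carrying the cat state down to $v$'s block; (iii) within $T$ below $v$ the branching nodes branch and the pass-through and leaf nodes transmit in the direction that keeps the target site $A_i$ and disentangles its sibling. Gates in subtrees of $v$ that lie outside $T$ are unconstrained: their actions marginalize out because any cat-state fragment they produce is disconnected from the target cat state by the $X$ measurements at the pass-through nodes of $T$.

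Multiplying, the $j$-th scenario contributes a term of the form $q^{k-2}(1-p-q)\bigl((p/2)^k/2\bigr)^j (p/2)^{k\log_2 x}$: the transmissions inside $T$ (pass-throughs plus leaves) account for $(p/2)^{k\log_2 x}=x^{-k\log_2(2/p)}$; each of the $j$ apex levels contributes $k$ transmissions plus an additional factor $1/2$ from a binary structural choice at that level; and one of the $k-1$ branchings required along $T$ is absorbed into the apex construction, leaving $q^{k-2}$. Summing the geometric series $\sum_{j\ge1}((p/2)^k/2)^j=(p/2)^k/(2-(p/2)^k)$ reproduces the claimed bound. The main obstacle will be the combinatorial reconciliation of the factor structure, especially deriving the exponent $k-2$ on $q$ rather than $k-1$ (which requires a careful identification of how the apex reflection replaces one branching through regeneration of the cat state below it) and the $1/2$-per-level weighting (which requires showing that the direction choices along the apex stem have a two-fold redundancy that must be factored out to maintain disjointness between scenarios at different $j$). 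Care will also be needed to treat the randomness in the cut position and to confirm that the scenarios are mutually exclusive.
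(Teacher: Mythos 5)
There is a genuine gap: your accounting of the factors does not close, and you have misattributed the origin of the geometric factor. In the paper's construction the reflection gate sits \emph{at} the least common ancestor $v$ itself. Since a reflection applies no $X$ to its output, it simultaneously (i) severs the cluster from the history below and (ii) passes the freshly created cat state to \emph{both} children, so it doubles as the root branch point of the spanning tree $T$. That is why only $k-2$ of the $k-1$ branch points of $T$ need branching gates, giving $q^{k-2}(1-p-q)$ directly. In your scenario the reflection is strictly above $v$ and the stem transmits down to $v$, so $v$ must still be a branching gate and you get $q^{k-1}(1-p-q)$, which does not match the claim; you flag this but the proposed fix (``one branching is absorbed into the apex construction'') is not justified and is not how the bound arises. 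Likewise, your apex stem contributes one transmission per level, i.e.\ a ratio of $p/2$ per unit of $j$, not the $((p/2)^k)/2$ you need, and the asserted ``$k$ transmissions plus a binary structural choice per apex level'' has no source in your construction.

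The factor $\frac{(p/2)^k}{2-(p/2)^k}$ actually comes from the random cut (the translational-invariance trick), which you set aside. The cut determines the dyadic block boundaries, hence the depth of the LCA: with $\ell_0=\lceil\log_2 x\rceil$, the probability that the $k$ sites first share a block at scale $2^{\ell_0+j}$ is $\approx x/2^{\ell_0+j}\approx 2^{-j}$, and an LCA that is $j$ levels deeper lengthens \emph{all $k$} root-to-leaf branches by $j$ edges, costing $(p/2)^{kj}$ extra transmissions. Summing
\begin{gather*}
\sum_{j\geq 1} 2^{-j}\left(\frac{p}{2}\right)^{kj} \;=\; \frac{(p/2)^k}{2-(p/2)^k}
\end{gather*}
produces the middle factor, while the baseline $(p/2)^{k\ell_0}$ gives $x^{-k\log_2(2/p)}$. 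So the correct disjoint decomposition is over LCA depths induced by the cut, with the reflection always at the LCA --- not over reflection positions above a fixed LCA. Your remaining ingredients (spanning-tree structure, directed transmissions at pass-through and leaf nodes, marginalizing gates outside $T$) are consistent with the paper's argument, which is itself only a heuristic count rather than a rigorous proof.
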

The first two terms of the bound comes from the necessity of $k-2$ branching gates and 1 reflection gate to form the path required to connect all $k$ sites, while the third term is a factor that comes from the translational invariance condition we applied to the ensemble. The final term comes from the requirement that all $k$ branches of the path reach their required destination - each branch consists of $O(\log(x))$ edges, as that is the approximate tree depth of the whole path, and each edge must select the correct transmission gate with probability $\frac{p}{2}$. Taking $p$ to be $1-\epsilon$ and $q, 1-p-q$ to be $\Theta(\epsilon)$, this probability bound roughly goes as $\epsilon^{k-1} x^{-k(1+\epsilon)}$. 

The exponent is only a lower bound, because there is still the probability that a trajectory gets stopped, completely and prematurely, on its way down by reflection nodes. While this probability is already accounted for whenever we require a trajectory to reach the end of a tree, it means that unwanted branching gates might not spoil the $k$-party entanglement if the unwanted trajectory gets stopped. At the large depth limit, this probability becomes $s=\min(1,\frac{1-p}{q}-1)$. This means that branching nodes might behave like transmission nodes instead if the unwanted branch ends prematurely - this effectively adds an extra $qs$ term to both transmission probabilities, shifting the exponent to $k\log_2(\frac{2}{p+2qs})$. This is assuming the branching node is always sufficiently deep to have stopping probability approximately equal to $s$, which might not be the case.

While this system has longer-range entanglement than the original MOC for all $k$, it has little connection to a CFT (at least in the original Euclidean geometry) due to the increasingly larger gates breaking the depthwise translation symmetry, and the sitewise translation symmetry only holding because of the decision to randomly move the entire circuit by some amount of sites. 

\subsection{Dyck word circuit}

Our next ensemble consists of only local, translation-symmetric gates, with even longer range 2-party entanglement than the hyperbolic circuit. At each layer, for each nearest-neighbor pair of sites $i,j$, we choose, with probability $p$, to measure $ZZ_{ij}$, then $X_i$ and $X_j$, then $ZZ_{ij}$ again. With probability $(1-p)$, we swap the two sites - though an equivalent to this gate can be constructed with measurements as well, provided we double the number of qubits at each site, or alternatively work on a ladder geometry. These circuits have similarities to a type of dual-unitary circuit with measurements, where each measurement forces the gates on its light cone to swaps~\cite{Claeys2022}. 

The resulting percolation model produces a distribution of string trajectories that travel around the bulk, and which are capable of changing direction vertically (from the $ZZ, X,X,ZZ$ measurement) but not horizontally. On the boundary, this produces a distribution of 2-site cat states, where sites $A$ and $B$ are part of the same cat state if they are connected by a trajectory string. For $p=\frac{1}{2}$, the weight of each possible string that connects $A$ and $B$ is $2^{-x}$, where $x$ is the distance between the two sites. The number of such strings is equal to the number of Dyck words of length $\frac{x+1}{2}$, which is the $\frac{x+1}{2}$\textit{th} Catalan number
\begin{align}
    C_{\frac{x+1}{2}} &\approx \frac{4^{\frac{x+1}{2}}}{\left(\frac{x+1}{2}\right)^{3/2}\sqrt{\pi}}\n 
    &\approx \frac{2^x}{x^{3/2}}
\end{align}
Therefore, the probability that $A$ and $B$ are entangled goes as $x^{-3/2}$ for large $x$. 
All values of $p$ produce the exact same scaling (Fig.~\ref{fig:dyck_word_circuit}a), with the only difference between the distributions being the constant prefactor, balanced by the probability of having a nearest-neighbor cat state. 

\begin{figure}[h]
    \centering
    \begin{tikzpicture}
        \begin{scope}
            \node[anchor=north west,inner sep=0] (image_a) at (0,0)
            {\includegraphics[width=0.4\columnwidth]{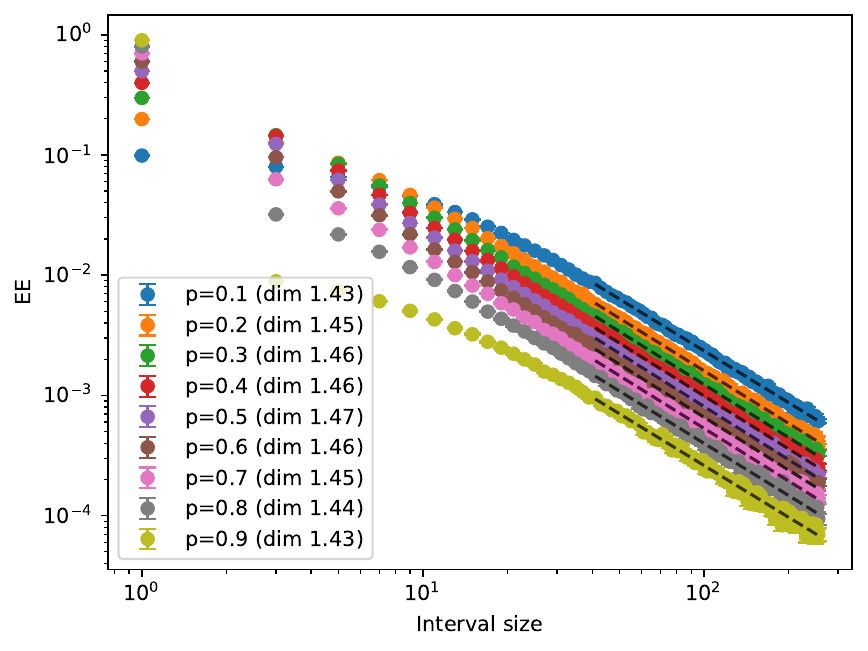}};
            \node [anchor=north west] (note) at (0,0) {\small{\textbf{a)}}};
        \end{scope}
        \begin{scope}[xshift=0.45\columnwidth]
            \node[anchor=north west,inner sep=0] (image_a) at (0,0)
            {\includegraphics[width=0.4\columnwidth]{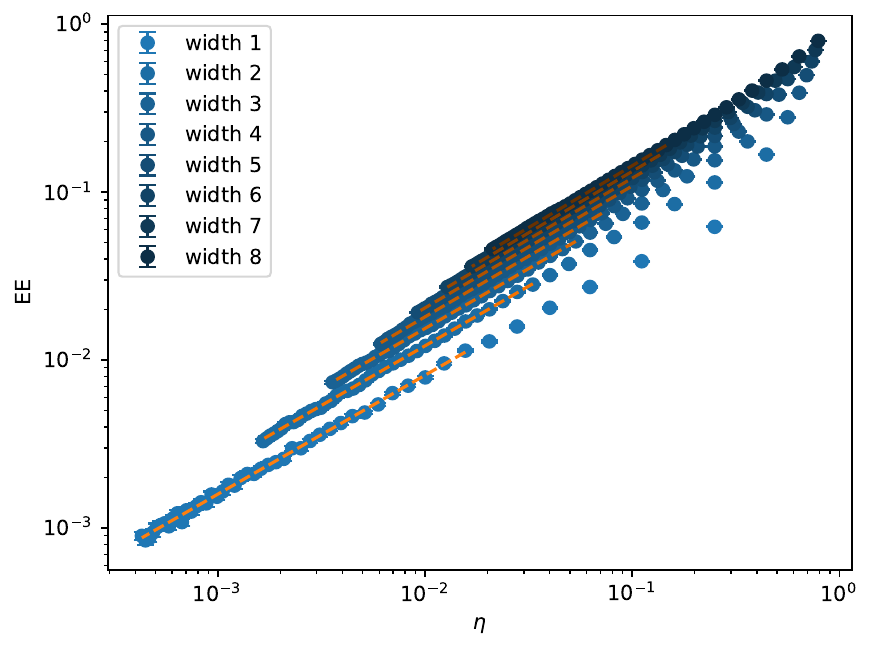}};
            \node [anchor=north west] (note) at (0,0) {\small{\textbf{b)}}};
        \end{scope}
    \end{tikzpicture}
    \vspace*{-0.4cm}
    \caption{(a) String chance in the Dyck word circuit over $p$ and distance. (b) Dyck word hit chance over $\eta$ for different subregion widths.}
    \label{fig:dyck_word_circuit}
\end{figure}

The entanglement distribution of this ensemble has no connection to a CFT, as there is no multipartite entanglement beyond $k=2$. Moreover, there is no asymptotic scale-invariance: the entanglement between different subregions is not solely a function of the anharmonic ratio $\eta$ even at large distance scales (Fig.~\ref{fig:dyck_word_circuit}b). Nonetheless, it is interesting that we can create an ensemble with such long-range entanglement, with robustness in at least one circuit parameter, just from a local, translation-invariant gate distribution.

\end{appendices}


\begin{thebibliography}{10}

\bibitem{Osterloh_2002}
A.~Osterloh, Luigi Amico, G.~Falci, and Rosario Fazio.
\newblock Scaling of entanglement close to a quantum phase transition.
\newblock {\em Nature}, 416:608, 2002.

\bibitem{Osborne}
Tobias~J. Osborne and Michael~A. Nielsen.
\newblock {Entanglement in a simple quantum phase transition}.
\newblock {\em Phys. Rev. A}, 66:032110, 2002.

\bibitem{Javanmard2018}
Younes Javanmard, Daniele Trapin, Soumya Bera, Jens~H. Bardarson, and Markus Heyl.
\newblock {Sharp entanglement thresholds in the logarithmic negativity of disjoint blocks in the transverse-field Ising chain}.
\newblock {\em New Journal of Physics}, 20(8), 8 2018.

\bibitem{Parez2024b}
Gilles Parez and William Witczak-Krempa.
\newblock {Entanglement negativity between separated regions in quantum critical systems}.
\newblock {\em Physical Review Research}, 6(2), 4 2024.

\bibitem{Parez2024}
Gilles Parez and William Witczak-Krempa.
\newblock {The fate of entanglement}.
\newblock 2024.

\bibitem{Sang2020}
Shengqi Sang, Yaodong Li, Tianci Zhou, Xiao Chen, Timothy~H. Hsieh, and Matthew P.~A. Fisher.
\newblock {Entanglement Negativity at Measurement-Induced Criticality}.
\newblock {\em PRX Quantum}, 2(3), 7 2021.

\bibitem{Avakian2024}
Sebastien~J Avakian, T.~Pereg-Barnea, and William Witczak-Krempa.
\newblock {Long-range multipartite entanglement near measurement-induced transitions}.
\newblock {\em Physical Review Research}, 7(2):023135, 5 2025.

\bibitem{Nahum2016}
Adam Nahum, Jonathan Ruhman, Sagar Vijay, and Jeongwan Haah.
\newblock {Quantum entanglement growth under random unitary dynamics}.
\newblock {\em Physical Review X}, 7(3), 7 2017.

\bibitem{Bouland2019}
Adam Bouland, Bill Fefferman, Chinmay Nirkhe, and Umesh Vazirani.
\newblock {“Quantum supremacy” and the complexity of random circuit sampling}.
\newblock {\em Leibniz International Proceedings in Informatics, LIPIcs}, 124, 2019.

\bibitem{Bouland2019a}
Adam Bouland, Bill Fefferman, Chinmay Nirkhe, and Umesh Vazirani.
\newblock {On the complexity and verification of quantum random circuit sampling}.
\newblock {\em Nature Physics}, 15(2):159--163, 2019.

\bibitem{Potter2021}
Andrew~C. Potter and Romain Vasseur.
\newblock {Entanglement Dynamics in Hybrid Quantum Circuits}.
\newblock In Bayat A., Bose S., and Johannesson H., editors, {\em Entanglement in Spin Chains}, pages 211--249. Springer, Cham, 4 2022.

\bibitem{Fisher2023}
Matthew P.~A. Fisher, Vedika Khemani, Adam Nahum, and Sagar Vijay.
\newblock {Random Quantum Circuits}.
\newblock {\em Annual Review of Condensed Matter Physics}, 14:335--379, 3 2023.

\bibitem{Brandao2016}
Fernando~G.S.L. Brand{\~{a}}o, Aram~W. Harrow, and Michał Horodecki.
\newblock {Local Random Quantum Circuits are Approximate Polynomial-Designs}.
\newblock {\em Communications in Mathematical Physics}, 346(2):397--434, 2016.

\bibitem{Boixo2018}
Sergio Boixo, Sergei~V. Isakov, Vadim~N. Smelyanskiy, Ryan Babbush, Nan Ding, Zhang Jiang, Michael~J. Bremner, John~M. Martinis, and Hartmut Neven.
\newblock {Characterizing quantum supremacy in near-term devices}.
\newblock {\em Nature Physics}, 14(6):595--600, 2018.

\bibitem{Napp2022}
John~C. Napp, Rolando~L. La~Placa, Alexander~M. Dalzell, Fernando~G.S.L. Brand{\~{a}}o, and Aram~W. Harrow.
\newblock {Efficient Classical Simulation of Random Shallow 2D Quantum Circuits}.
\newblock {\em Physical Review X}, 12(2), 2022.

\bibitem{Li2019}
Yaodong Li, Xiao Chen, and Matthew~P.A. Fisher.
\newblock {Measurement-driven entanglement transition in hybrid quantum circuits}.
\newblock {\em Physical Review B}, 100(13), 10 2019.

\bibitem{Skinner2019}
Brian Skinner, Jonathan Ruhman, and Adam Nahum.
\newblock {Measurement-Induced Phase Transitions in the Dynamics of Entanglement}.
\newblock {\em Physical Review X}, 9(3), 2019.

\bibitem{Vasseur2019}
Romain Vasseur, Andrew~C. Potter, Yi~Zhuang You, and Andreas~W.W. Ludwig.
\newblock {Entanglement transitions from holographic random tensor networks}.
\newblock {\em Physical Review B}, 100(13), 10 2019.

\bibitem{Zabalo2020}
Aidan Zabalo, Michael~J. Gullans, Justin~H. Wilson, Sarang Gopalakrishnan, David~A. Huse, and J.~H. Pixley.
\newblock {Critical properties of the measurement-induced transition in random quantum circuits}.
\newblock {\em Physical Review B}, 101(6), 2020.

\bibitem{Bera2020}
Anindita Bera and Sudipto Singha~Roy.
\newblock {Growth of genuine multipartite entanglement in random unitary circuits}.
\newblock {\em Physical Review A}, 102(6), 12 2020.

\bibitem{Ware2023}
Brayden Ware, Abhinav Deshpande, Dominik Hangleiter, Pradeep Niroula, Bill Fefferman, Alexey~V. Gorshkov, and Michael~J. Gullans.
\newblock {A sharp phase transition in linear cross-entropy benchmarking}.
\newblock 2023.

\bibitem{Movassagh2019}
Ramis Movassagh.
\newblock {Quantum supremacy and random circuits}.
\newblock 2019.

\bibitem{Dalzell2022}
Alexander~M. Dalzell, Nicholas Hunter-Jones, and Fernando~G.S.L. Brand{\~{a}}o.
\newblock {Random Quantum Circuits Anticoncentrate in Log Depth}.
\newblock {\em PRX Quantum}, 3(1):10333, 2022.

\bibitem{Chen2024}
Chi-Fang Chen, Adam Bouland, Fernando G. S.~L. Brand{\~{a}}o, Jordan Docter, Patrick Hayden, and Michelle Xu.
\newblock {Efficient unitary designs and pseudorandom unitaries from permutations}.
\newblock 2024.

\bibitem{Gottesman1998}
Daniel Gottesman.
\newblock {The Heisenberg Representation of Quantum Computers}.
\newblock In {S. P. Corney}, {R. Delbourgo}, and {P. D. Jarvis}, editors, {\em Proceedings of the XXII International Colloquium on Group Theoretical Methods in Physics}, pages 32--43, 1998.

\bibitem{Bravyi2016}
Sergey Bravyi and David Gosset.
\newblock {Improved Classical Simulation of Quantum Circuits Dominated by Clifford Gates}.
\newblock {\em Physical Review Letters}, 116(25), 6 2016.

\bibitem{Audenaert2005}
Koenraad M~R Audenaert and Martin~B Plenio.
\newblock {Entanglement on mixed stabiliser states: Normal Forms and Reduction Procedures}.
\newblock {\em New Journal of Physics}, 7(1):170, 8 2005.

\bibitem{Guhne2009}
Otfried G{\"{u}}hne and Géza T{\'{o}}th.
\newblock {Entanglement detection}.
\newblock {\em Physics Reports}, 474(1-6):1--75, 2009.

\bibitem{Guhne2010}
Otfried G{\"{u}}hne and Michael Seevinck.
\newblock {Separability criteria for genuine multiparticle entanglement}.
\newblock {\em New Journal of Physics}, 12, 2010.

\bibitem{Ma2011b}
Zhi~Hao Ma, Zhi~Hua Chen, Jing~Ling Chen, Christoph Spengler, Andreas Gabriel, and Marcus Huber.
\newblock {Measure of genuine multipartite entanglement with computable lower bounds}.
\newblock {\em Physical Review A - Atomic, Molecular, and Optical Physics}, 83(6), 6 2011.

\bibitem{Lyu2024}
Liuke Lyu, Zi~Yang Meng, Menghan Song, Ting-Tung Wang, and William Witczak-Krempa.
\newblock {Multiparty Entanglement Microscopy of Quantum Ising models in 1d, 2d and 3d}.
\newblock {\em Physical Review B}, 111(24):245108, 6 2025.

\bibitem{Nahum2020}
Adam Nahum and Brian Skinner.
\newblock {Entanglement and dynamics of diffusion-annihilation processes with Majorana defects}.
\newblock {\em Physical Review Research}, 2(2), 6 2020.

\bibitem{Lang2020}
Nicolai Lang and Hans~Peter B{\"{u}}chler.
\newblock {Entanglement Transition in the Projective Transverse Field Ising Model}.
\newblock 6 2020.

\bibitem{Bravyi2006}
Sergey Bravyi, David Fattal, and Daniel Gottesman.
\newblock {GHZ extraction yield for multipartite stabilizer states}.
\newblock {\em Journal of Mathematical Physics}, 47(6), 6 2006.

\bibitem{Cardy1992}
John~L Cardy.
\newblock {Critical Percolation in Finite Geometries}.
\newblock {\em Journal of Physics A: Mathematical and General}, 25(4):L201--L206, 1992.

\bibitem{Lira-Solanilla2025}
Arnau Lira-Solanilla, Xhek Turkeshi, and Silvia Pappalardi.
\newblock {Multipartite entanglement structure of monitored quantum circuits}.
\newblock {\em Physical Review Letters}, 135(8):080401, 8 2025.

\bibitem{Navascues2020}
Miguel Navascu{\'{e}}s, Elie Wolfe, Denis Rosset, and Alejandro Pozas-Kerstjens.
\newblock {Genuine Network Multipartite Entanglement}.
\newblock {\em Physical Review Letters}, 125(24), 12 2020.

\bibitem{Guo2023}
Yu~Guo and Lizhong Huang.
\newblock {Complete monogamy of the multipartite quantum mutual information}.
\newblock {\em Physical Review A}, 107(4), 4 2023.

\bibitem{Kumar2025}
Asutosh Kumar.
\newblock {Family of quantum mutual information in multiparty quantum systems}.
\newblock {\em Physics Letters A}, 529, 1 2025.

\bibitem{Kleban2006}
P~Kleban, J~J~H Simmons, and R~M Ziff.
\newblock {Anchored Critical Percolation Clusters and 2-D Electrostatics}.
\newblock {\em Physical Review Letters}, 97(11), 9 2006.

\bibitem{Lewellen1992}
David~C Lewellen.
\newblock {Sewing constraints for conformal field theories on surfaces with boundaries}.
\newblock {\em Nuclear Physics B}, 372(3):654--682, 3 1992.

\bibitem{Wang2013}
Junfeng Wang, Zongzheng Zhou, Wei Zhang, Timothy~M. Garoni, and Youjin Deng.
\newblock {Bond and Site Percolation in Three Dimensions}.
\newblock {\em Physical Review E}, 87(5), 5 2013.

\bibitem{Sang2020b}
Shengqi Sang and Timothy~H. Hsieh.
\newblock {Measurement Protected Quantum Phases}.
\newblock {\em Physical Review Research}, 3(2):023200--023209, 4 2020.

\bibitem{Temperley1971}
H.~N.~V. Temperley and Elliott~H. Lieb.
\newblock {Relations between the 'percolation' and 'colouring' problem and other graph-theoretical problems associated with regular planar lattices: some exact results for the 'percolation' problem}.
\newblock {\em Proc. Roy. Soc. Lond. A}, 322(1549):251--280, 4 1971.

\bibitem{Koo1993}
W~M Koo and H~Saleur.
\newblock {Representations of the Virasoro algebra from lattice models}.
\newblock {\em Nuclear Physics B}, 426(3):459--504, 9 1994.

\bibitem{Martin1988}
P.~P. Martin.
\newblock {Temperley-Lieb algebra, group theory and the Potts model}.
\newblock {\em J. Phys. A: Math. Gen}, 21:577, 1988.

\bibitem{VanDenBerg1985}
J.~Van Den~Berg and H.~Kesten.
\newblock {Inequalities with applications to percolation and reliability}.
\newblock {\em Journal of Applied Probability}, 22(3):556--569, 1985.

\bibitem{Reimer2000}
David Reimer.
\newblock {Proof of the van den Berg-Kesten conjecture}.
\newblock {\em Combinatorics, Probability and Computing}, 9(1):27--32, 2000.

\bibitem{Claeys2022}
Pieter~W. Claeys, Marius Henry, Jamie Vicary, and Austen Lamacraft.
\newblock {Exact dynamics in dual-unitary quantum circuits with projective measurements}.
\newblock {\em Physical Review Research}, 4(4), 2022.

\end{thebibliography}
\end{document}